
\documentclass[preprint,11pt,3p,sort&compress]{elsarticle}

\usepackage{fixltx2e,amsmath,graphicx,amssymb,amsthm}
\usepackage[mathlines,displaymath]{lineno}

\usepackage{ALgo}
\usepackage{amsmath,amssymb}
\usepackage{oubraces}
\usepackage{tabularx}
\usepackage{multirow}
\usepackage{tikz}
\usepackage{here}

\def\PV{\mathcal{P}}
\def\QV{\mathcal{Q}} 
\def\PVW{\mathcal{P}_{w}}
\def\PVWA{\mathcal{P}_{wa}}

\def\pp{\ldotp\ldotp}
\def\nt{\mbox{NewHeap}}

\def\cd3#1{\textbf{\textsf{#1}}}
\def\sa#1{\cd3{#1}}
\def\cut{\textit{Cut}_w}
\def\MW{\mathcal{M}_w} 
\def\GW{\mathcal{G}_w} 
\def\selW{\textit{select}_a}
\def\sel{\textit{select}}
\def\C{\textit{C}_w}
\def\S{\textit{S}_w}
\def\ind{\textit{ind}}

\renewcommand{\epsilon}{\varepsilon}

\newtheorem{theorem}{Theorem}[section]
\newtheorem{proposition}[theorem]{Proposition}
\newtheorem{lemma}[theorem]{Lemma}

\newtheorem{definition}{Definition}
\newtheorem{example}{Example}


\begin{document}

\sloppy

\begin{frontmatter}

\title{\textbf{Algorithms for Computing Abelian Periods of Words}\tnoteref{note1}}
\tnotetext[note1]{Some of the results in this paper were presented at Prague Stringology Conference 2011 \cite{FiLeLePr11}.}

\author[palermo]{Gabriele Fici\corref{cor1}}
\ead{Gabriele.Fici@unipa.it}

\author[rouen]{Thierry Lecroq}
\ead{Thierry.Lecroq@univ-rouen.fr}

\author[rouen]{Arnaud Lefebvre}
\ead{Arnaud.Lefebvre@univ-rouen.fr}

\author[rouen]{\'Elise Prieur-Gaston}
\ead{Elise.Prieur@univ-rouen.fr}

\address[palermo]{Dipartimento di Matematica e Informatica, Universit\`a di Palermo, Italy}

\address[rouen]{Normandie Universit\'e, LITIS EA4108, Universit\'e de Rouen, 76821 Mont-Saint-Aignan Cedex, France}

\cortext[cor1]{Corresponding author.}

\journal{Discrete Applied Mathematics}


\begin{abstract}
Constantinescu and Ilie (Bulletin EATCS 89, 167--170, 2006) introduced
 the notion of an \emph{Abelian period} of a word. A word of length $n$ over an alphabet of size $\sigma$ can have $\Theta(n^{2})$ distinct Abelian periods. The Brute-Force algorithm computes all the Abelian periods of a word in time $O(n^2 \times \sigma)$ using $O(n \times \sigma)$ space. We present an off-line algorithm based on a $\sel$ function having the same worst-case theoretical complexity as the Brute-Force one, but outperforming it in practice. We then present on-line algorithms that also enable to compute all the Abelian periods of all the prefixes of $w$.
\end{abstract}

\begin{keyword}
Abelian period; Abelian repetition; weak repetition; design of algorithms; text algorithms;  Combinatorics on Words
\end{keyword}

\end{frontmatter}


\section{Introduction}

An integer $p>0$ is a (classical) period of a word $w$ of length $n$
 if $w[i]=w[i+p]$ for every $1\leqslant i \leqslant n-p$.
Classical periods have been extensively studied in Combinatorics on
 Words~\cite{Lothaire2} due to their direct applications in data compression
 and pattern matching.

The Parikh vector of a word $w$ enumerates the
 cardinality of each letter of the alphabet in $w$. 
 For example, given the alphabet $\Sigma=\{\sa{a,b,c}\}$, the Parikh vector of the
 word $w=\sa{aaba}$ is $(3,1,0)$.
The reader can refer to~\cite{BCFL2011} for a list of applications of Parikh
 vectors.

An integer $p$ is an \emph{Abelian period} of a word $w$ over a finite alphabet $\Sigma=\{a_{1},a_{2},\ldots , a_{\sigma}\}$ if $w$
 can be written as $w=u_0u_1 \cdots u_{k-1}u_k$ where for $0<i<k$ all
 the $u_i$'s have the same Parikh vector $\PV$ such that $\sum_{i=1}^{\sigma}\PV[i]=p$
  and the Parikh vectors of $u_0$ and $u_k$ are contained
 in $\PV$~\cite{CI2006}. For example, the word $w=\sa{ababbbabb}$ can be written as $w=u_{0}u_{1}u_{2}u_{3}$, with $u_{0}=\sa{a}$, $u_{1}=\sa{bab}$, $u_{2}=\sa{bba}$ and $u_{3}=\sa{bb}$, and $3$ is an Abelian period of $w$.
 
This definition of Abelian period matches the one of \emph{weak repetition} (also called
 \emph{Abelian power}) when $u_0$ and $u_k$ are the empty word and
 $k>2$~\cite{Cummings_weakrepetitions}.

In recent years, several efficient algorithms have been designed for an Abelian version of the classical pattern matching problem, called the \emph{Jumbled Pattern Matching} problem~\cite{CFL2009,BuCiFiLi10a,BCFL2011,MR10,BuCiFiLi12a,MR12,BaFiKrLi13,GG13}, defined as the problem of finding the occurrences of a substring in a text up to a permutation of the letters in the substring, i.e., the occurrences of any substring of the text having the same Parikh vector as the pattern. 
However, apart from the greedy off-line algorithm given in~\cite{Cummings_weakrepetitions}, no efficient algorithms are known for computing all the Abelian periods of a given word\footnote{During the publication process of the present article, some papers dealing with the computation of the Abelian periods of a word have been published \cite{FiLeLePGSm12,KRR13}.}.

In this article, we present several off-line and on-line algorithms
 for computing all the Abelian periods of a given word.
In Section~\ref{sec-def} we give some basic definitions and fix the notation.
Section~\ref{sec-off} presents off-line algorithms, while Section~\ref{sec-on}
 presents on-line algorithms.
In Section~\ref{sec-exp} we give some experimental results on execution times.
Finally, Section~\ref{sec-conc} contains conclusions and perspectives. 
 

\section{\label{sec-def}Definitions and notation}

Let $\Sigma=\{a_{1},a_{2},\ldots ,a_{\sigma}\}$ be a finite ordered alphabet of
 cardinality $\sigma$ and $\Sigma^*$ the set of words over $\Sigma$. 
We set $\ind(a_i)=i$ for $1\leqslant i \leqslant \sigma$.
We denote by $|w|$ the length of $w$. We  write $w[i]$ the $i$-th symbol of $w$
 and $w[i\pp j]$ the factor of $w$
 from the $i$-th symbol to the $j$-th symbol included,
 with $1\leqslant i \leqslant j\leqslant |w|$.
We denote by $|w|_a$ the number of occurrences of the letter $a\in\Sigma$
 in the word $w$. 

The \emph{Parikh vector} of a word $w$, denoted by $\PVW$, counts the
 occurrences of each letter of $\Sigma$ in $w$, i.e.,  
 $\PVW=(|w|_{a_{1}},\ldots,|w|_{a_{\sigma}})$.
Notice that two words have the same Parikh vector if and only if
 one is obtained from the other by permuting letters (in other words, one is an anagram of the other).
We denote by $\PVW(i,m)$ the Parikh vector of the factor of length $m$ beginning at
 position $i$ in the word $w$.

Given the Parikh vector $\PVW$ of a word $w$, we denote by $\PVW [i]$ its
 $i$-th component 
 and by $|\PVW|$ its norm, that is the sum of its components.
Thus, for $w\in\Sigma^*$ and $1\leqslant i\leqslant\sigma$, we have
 $\PVW [i]=|w|_{a_i}$ and $|\PVW|=\sum_{i=1}^{\sigma}\PVW[i]=|w|$.
Finally, given two Parikh vectors $\PV,\QV$, we write $\PV\subset \QV$ if
 $\PV[i]\leqslant \QV[i]$
 for every $1\leqslant i\leqslant \sigma$ and $|\PV|<|\QV|$. 
 
\begin{definition}[\hspace{-.4mm}\cite{CI2006}]
\label{def-ap}
A word $w$ has an Abelian period $(h,p)$ if $w=u_0u_1 \cdots u_{k-1}u_k$ such that:

\begin{itemize}
 \item $\PV_{u_{0}}\subset \PV_{u_{1}}=\cdots =\PV_{u_{k-1}}\supset \PV_{u_{k}}$,
 \item $|\PV_{u_{0}}|=h$, $|\PV_{u_{1}}|=p$.
\end{itemize}

\end{definition}

We call $u_0$ and $u_k$ resp.\ the \emph{head} and the
 \emph{tail}  of the Abelian period.
 Notice that the length $t=|u_k|$ of the tail is uniquely determined
 by $h$, $p$ and $|w|$, namely $t=(|w|-h) \bmod p$. 

The following lemma gives an upper bound on the number of Abelian periods
 of a word.
 
\begin{lemma}
\label{lemma-max}
A word of length $n$ over an alphabet $\Sigma$ of cardinality $\sigma$ can have $\Theta(n^2)$ different Abelian periods.
\end{lemma}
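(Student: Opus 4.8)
The statement asserts a $\Theta(n^2)$ bound, so the plan is to prove it in two halves: an $O(n^2)$ upper bound valid for every word, and an $\Omega(n^2)$ lower bound realized by a concrete family of words.

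For the upper bound, I would observe that each Abelian period is completely determined by its pair $(h,p)$ from Definition~\ref{def-ap}. By that definition the containment $\PV_{u_0}\subset\PV_{u_1}$ forces $|\PV_{u_0}|<|\PV_{u_1}|$, i.e.\ $0\leqslant h<p$, and trivially $p\leqslant |w|=n$. Hence the set of admissible pairs is contained in $\{(h,p):0\leqslant h<p\leqslant n\}$, whose cardinality is $\sum_{p=1}^{n}p=n(n+1)/2=O(n^2)$. Since distinct Abelian periods correspond to distinct pairs, a word of length $n$ can have at most $O(n^2)$ of them.

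For the lower bound, I would exhibit a word attaining this count; the unary word $w=\sa{a}^n$ is the natural candidate. Over a single-letter alphabet every Parikh vector is determined by its norm, so the chain of containments in Definition~\ref{def-ap} collapses to the purely numerical conditions $h<p$ and $t<p$, the latter being automatic since $t=(n-h)\bmod p$. Thus every pair $(h,p)$ with $0\leqslant h<p$ for which the factorization has room for the internal blocks yields a valid Abelian period. Restricting to, say, $p\leqslant n/3$ guarantees at least one full internal block for each admissible $h$, and counting these pairs gives $\sum_{p=1}^{\lfloor n/3\rfloor}p=\Omega(n^2)$ distinct periods.

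The two bounds together give $\Theta(n^2)$. I expect the only delicate point to lie in the edge cases of the lower-bound construction: checking that the factorization of $\sa{a}^n$ genuinely satisfies Definition~\ref{def-ap} when the head $u_0$ or the tail $u_k$ is empty, and that at least one internal block is present so that the equality chain $\PV_{u_1}=\cdots=\PV_{u_{k-1}}$ is non-vacuous. Restricting the range of $p$ as above sidesteps these degeneracies while preserving the quadratic count, so the argument goes through.
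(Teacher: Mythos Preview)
Your proof is correct. Both you and the paper establish the lower bound by exhibiting a single word with quadratically many Abelian periods, but you choose a different witness: the paper uses the balanced word $w=(a_1a_2\cdots a_\sigma)^{n/\sigma}$ and observes that every pair $(h,p)$ with $p\equiv 0\pmod\sigma$ and $0\leqslant h\leqslant\min(p-1,n-p)$ is an Abelian period, whereas you take the unary word $\sa{a}^n$. Your example is simpler and the verification is more transparent, since every Parikh vector collapses to its norm; the paper's example has the minor advantage that it actually uses every letter of $\Sigma$, which matters if one reads ``a word over an alphabet of cardinality $\sigma$'' strictly. You are also more explicit than the paper on the $O(n^2)$ upper bound, which the paper leaves tacit in its proof. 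Both constructions give $\Theta(n^2)$ for fixed $\sigma$, so either suffices.
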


\begin{proof}
The word $w=(a_{1}a_{2}\cdots a_{\sigma})^{n/\sigma}$ has Abelian period
 $(h,p)$ for any $p\equiv 0 \bmod \sigma$ and every $h$ such that 
 $0 \leqslant h \leqslant \min(p-1,n-p)$. Therefore, $w$ has $\Theta(n^2)$ different Abelian periods.
\end{proof}

A natural order can be defined on the Abelian periods of a word.

\begin{definition}
Two distinct Abelian periods $(h,p)$ and $(h',p')$ of a word $w$ are
 ordered as follows:
 $(h,p) < (h',p') \mbox{ if }
  p<p' 
\mbox{ or } 
 (p=p' \mbox{ and } h<h')$.
\end{definition}

We are interested in computing all the Abelian periods of a word. 
However, the algorithms we present in this paper can be easily adapted to give 
 the smallest Abelian period only.


\section{\label{sec-off}Off-line algorithms}


\subsection{Brute-Force algorithm}

In \figurename~\ref{algo-off-line}, we present a Brute-Force algorithm computing
 all the Abelian periods of an input word $w$ of length $n$.
For each possible head of length $h$ from $1$ to $\lfloor (n-1)/2\rfloor$ the
 algorithm tests all the possible values of $p$ such that $p>h$ and
 $h+p\leqslant n$.
It is a reformulation of the algorithm given
 in~\cite{Cummings_weakrepetitions}.

\begin{figure}[h]
\begin{center}
 \input{algo_offline_reduced.tex}
\caption{
\label{algo-off-line}
Brute-Force algorithm for computing all the Abelian periods of a word $w$ of
 length $n$.
}
\end{center}
\end{figure}

\begin{example}
\label{example1}
For $w=\sa{abaababa}$ the algorithm outputs
 $(1,2)$,
 $(0,3)$, $(2,3)$,
 $(1,4)$, $(2,4)$, $(3,4)$,
 $(0,5)$, $(1,5)$, $(2,5)$, $(3,5)$,
 $(0,6)$, $(1,6)$, $(2,6)$,
 $(0,7)$, $(1,7)$ and
 $(0,8)$.
Among these periods, $(1,2)$ is the smallest. 
\end{example}

\begin{theorem}\label{theor:bf}
The algorithm \textsc{AbelianPeriod-BruteForce} computes
 all the Abelian periods of a given word of length $n$ in time
 $O(n^2 \times \sigma)$ with $O(n\times\sigma)$ space.
\end{theorem}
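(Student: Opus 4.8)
The plan is to analyze the two nested loops of \textsc{AbelianPeriod-BruteForce} together with the cost of the membership test on line~\ref{algo1-line1}. First I would count the iterations: the outer loop runs $h$ from $0$ to $\lfloor (n-1)/2\rfloor$, giving $O(n)$ values, and for each fixed $h$ the inner loop runs $p$ from $h+1$ while $h+p\leqslant n$, i.e.\ over $O(n)$ candidate values of $p$. Hence the pair $(h,p)$ is examined $O(n^2)$ times in total, which already matches the $\Theta(n^2)$ bound on the number of periods established in Lemma~\ref{lemma-max}, so no loss is incurred by the loop structure itself.

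Next I would bound the cost of deciding, for a single pair $(h,p)$, whether $(h,p)$ is an Abelian period of $w$. By Definition~\ref{def-ap} this amounts to writing $w=u_0u_1\cdots u_{k-1}u_k$ with $|u_0|=h$, each interior block $u_i$ of length $p$, and a tail of length $t=(n-h)\bmod p$, then checking that the $\sigma$-dimensional Parikh vectors satisfy $\PV_{u_0}\subset\PV_{u_1}=\cdots=\PV_{u_{k-1}}\supset\PV_{u_k}$. The key step is to argue that this test runs in $O(n\times\sigma)$ time: one scans $w$ once, maintaining the Parikh vector of the current block in an array of size $\sigma$, and at each block boundary compares the newly completed Parikh vector against the reference vector $\PV_{u_1}$ component by component in $O(\sigma)$ time; the containment checks for head and tail are likewise $O(\sigma)$. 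Since there are $O(n/p)$ blocks and each contributes $O(\sigma)$ work for comparison plus $O(p)$ work for reading its letters, the total per-pair cost is $O(n+ (n/p)\sigma)=O(n\times\sigma)$.

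Multiplying the $O(n^2)$ iterations by the $O(n\times\sigma)$ per-test cost would naively give $O(n^3\times\sigma)$, so the main obstacle is to see that the claimed $O(n^2\times\sigma)$ bound is achievable, which forces the per-pair test to be amortized to $O(n)$ or the accounting to be reorganized. I would therefore refine the analysis by fixing $p$ and summing over $h$: for a fixed block length $p$ the interior Parikh vectors are shared across all heads, so the block decomposition of $w$ into windows of length $p$ can be precomputed once in $O(n)$ time (using prefix sums of letter counts, giving any $\PVW(i,p)$ in $O(\sigma)$), after which each of the $O(n)$ values of $h$ is tested in $O(\sigma)$ amortized time by checking only the head, the tail, and whether all full windows agree. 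This yields $O(n\times\sigma)$ per value of $p$ and $O(n^2\times\sigma)$ over all $p$.

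Finally I would address space: the algorithm needs only the input word, a constant number of $\sigma$-length Parikh vectors (for the current block, the reference block, head, and tail), and the prefix-sum table of size $O(n\times\sigma)$, so $O(n\times\sigma)$ space suffices. The hard part is pinning down exactly how the membership test is implemented so that the product of the iteration count and the per-test cost collapses to $O(n^2\times\sigma)$ rather than a cubic bound; once the prefix-sum reformulation is in place the time and space bounds follow by the straightforward summations above.
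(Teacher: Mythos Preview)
Your setup is right and matches the paper's: precompute the table of prefix Parikh vectors in $O(n\times\sigma)$ space, so that each block comparison costs $O(\sigma)$, and for a fixed pair $(h,p)$ the test needs $O(n/p)$ such comparisons, i.e.\ $O((n/p)\,\sigma)$ time. Where you go astray is in immediately weakening this to $O(n\times\sigma)$ per pair, panicking at the resulting cubic bound, and then trying to recover with a ``refined analysis'' that is incorrect as stated. Your claim that ``for a fixed block length $p$ the interior Parikh vectors are shared across all heads'' is false: the blocks for head $h$ start at positions $h+1,\,h+p+1,\,h+2p+1,\ldots$, so different values of $h\in\{0,\ldots,p-1\}$ give entirely disjoint block decompositions, and nothing is shared. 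Consequently the assertion that each head is handled in $O(\sigma)$ amortized time has no justification.

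The paper avoids this detour entirely: it keeps the sharper per-pair bound $O((n/p)\,\sigma)$ and simply sums
\[
\sum_{h=0}^{\lfloor (n-1)/2\rfloor}\ \sum_{p=h+1}^{n-h}\frac{n}{p}
\;=\;O\!\left(\sum_{h=1}^{n}\sum_{p=h}^{n}\frac{n}{p}\right)
\;=\;O(n^2),
\]
since for each fixed $p$ the number of admissible heads is at most $\min(p,\,n-p+1)$ and $(n/p)\cdot\min(p,n-p+1)\le n$. Multiplying by $\sigma$ gives the claimed $O(n^2\times\sigma)$ time; the $O(n\times\sigma)$ space is exactly the prefix Parikh table you already identified. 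In short, you had the correct per-pair cost in hand and only needed to sum it directly rather than coarsen it first.
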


\begin{proof}
The correctness of the algorithm comes directly
 from Definition~\ref{def-ap}. In a preprocessing phase, all the prefixes of the word are  computed and stored in a table. This takes time $O(n)$ and space $O(n\times \sigma)$. In this way, the computation of the Parikh vector of a factor of the word can be done by computing the difference between two Parikh vectors in the table.
 Since the algorithm performs $\sum_{h=0}^{\lfloor (n-1)/2 \rfloor} \sum_{p=h+1}^{n-h} n/p=O(\sum_{h=1}^n \sum_{p=h}^n n/p) = O(n^{2})$ many comparisons between two Parikh vectors, and since each comparison takes $O(\sigma)$ time, the overall time and space complexity are as claimed (output periods are not stored).
\end{proof}


\subsection{Select-based algorithm}

Let us introduce the $\sel$ function \cite{NavarroMakinen} defined as follows.
\begin{definition}
Let $w$ be a word of length $n$ over alphabet $\Sigma$, then
 $\forall\, a\in\Sigma$:
\begin{itemize}
\item $\selW(w,0)=0$;
\item  $\forall\, 1\leqslant i\leqslant\vert w\vert_a$, $\selW(w,i)=j$ if and only if 
 $j$ is the position of the $i$-th occurrence of letter $a$ in $w$;
 \item $\forall\, i>\vert w\vert_a$, $\selW(w,i)$ is undefined.
\end{itemize}
\end{definition}

In order to compute the $\sel$ function of a word $w$, we consider an array $\S$
 of size $|w|$ storing the (ordered) positions of the occurrences of the letter $a_1$ in $w$,
 then the positions of the occurrences of the letter $a_2$ and so on, up to the positions of the occurrences of the letter  $a_\sigma$. 
In addition to $\S$, we also consider an array $\C$ of $\sigma+1$ elements defined by: $\C[1]=1$, 
 $\C[i]=\sum_{j=1}^{i-1} |w|_{a_j} +1$
 for $1< i \leqslant \sigma$ 
 and  $\C[\sigma +1]=|w|+1$. In fact, $\C[i]-1$ is the number of occurrences of letters strictly smaller than $a_i$ in $w$.
Array $\C$ serves as an index to access $\S$.
Hence, for a letter $a\in \Sigma$ and $i>0$, we have:
$$\sel_{a}(w,i) =
 \begin{cases} 
    \S[\C[\ind(a)]+i-1] & \mbox{ if } i\leqslant \C[\ind(a)+1]-\C[\ind(a)], \\
    \mbox{ undefined} & \mbox{ otherwise.} 
 \end{cases}
$$
\begin{example}
\label{example3}
For $w=\sa{abaababa}$, the $\sel$ function uses the following three arrays:

\begin{center}
\begin{tabular}{lcccccccc}
&$1$&$2$&$3$&$4$&$5$&$6$&$7$&$8$\\
\cline{2-9}
$w$&\multicolumn{1}{|c}{\tt a}&\multicolumn{1}{|c}{\tt b}&\multicolumn{1}{|c}{\tt a}&\multicolumn{1}{|c}{\tt a}&\multicolumn{1}{|c}{\tt b}&\multicolumn{1}{|c}{\tt a}&
\multicolumn{1}{|c}{\tt b}&\multicolumn{1}{|c|}{\tt a}\\
\cline{2-9}
\end{tabular}
\hspace{1cm}
\begin{tabular}{lcc}
&$\tt{a}$&$\tt{b}$\\
\cline{2-3}
$\ind~$&\multicolumn{1}{|c}{$1$}&\multicolumn{1}{|c|}{$2$}\\
\cline{2-3}
\end{tabular}
\hspace{1cm}
\begin{tabular}{lccc}
&{\small $1$}&{\small $2$}&{\small $3$}\\
\cline{2-4}
$\C$&\multicolumn{1}{|c}{$1$}&\multicolumn{1}{|c}{$6$}&\multicolumn{1}{|c|}{$9$}\\
\cline{2-4}
\end{tabular}

\vspace{.5cm}

\begin{tabular}{rcccccccc}
&{\small $1$}&{\small $2$}&{\small $3$}&{\small $4$}&{\small $5$}&{\small $6$}&{\small $7$}&{\small $8$}\\
\cline{2-9}
$\S$&\multicolumn{1}{|c}{$1$}&\multicolumn{1}{|c}{$3$}&\multicolumn{1}{|c}{$4$}&\multicolumn{1}{|c}{$6$}&\multicolumn{1}{|c}{$8$}&\multicolumn{1}{|c}{$2$}&
\multicolumn{1}{|c}{$5$}&\multicolumn{1}{|c|}{$7$}\\
\cline{2-9}\\
\end{tabular}

\end{center}
Then, for instance, $\sel_{\sa{b}}(w,2)=\S[\C[\ind({\sa{b}})]+2-1]=\S[7]=5$, meaning that the second $\sa{b}$ in $w$ appears in position $5$.
\end{example}

Algorithm \textsc{ComputeSelect} (see \figurename{~\ref{algo-computeSel}})
 computes the two arrays $\C$ and $\S$ used by
 the \sel\ function.

\begin{proposition}
Algorithm \textsc{ComputeSelect} runs in
 $O(n+\sigma)$ time and space.
\end{proposition}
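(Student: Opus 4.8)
The plan is to recognize \textsc{ComputeSelect} as a form of counting sort and to bound its cost phase by phase, which yields both the time and the space claims simultaneously. First I would treat the tabulation of letter frequencies: a single left-to-right scan of $w$ that increments, for each symbol read, the counter of the corresponding letter. Assuming each letter $a$ can be mapped to its index $\ind(a)$ in constant time (e.g.\ letters are identified with integers in $\{1,\dots,\sigma\}$), this scan does $O(1)$ work per position and thus runs in $O(n)$ time, using an array of $\sigma$ counters, i.e.\ $O(\sigma)$ space.

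Next I would account for building $\C$. Since $\C[1]=1$, $\C[i]=\C[i-1]+|w|_{a_{i-1}}$ for $1<i\leqslant\sigma$, and $\C[\sigma+1]=|w|+1$, each of the $\sigma+1$ entries is produced from its predecessor by one addition using the counts already available. This prefix-sum pass costs $O(\sigma)$ time and stores an array of size $\sigma+1$.

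The core of the argument is the filling of $\S$. I would perform a second left-to-right scan of $w$, keeping for every letter a running pointer initialized to $\C[\ind(a)]$; at position $j$ with $a=w[j]$, the value $j$ is written at the slot named by that pointer, after which the pointer is advanced by one. Again each position entails $O(1)$ work, so this phase is $O(n)$; the array $\S$ has size $n$ and the auxiliary pointers occupy $O(\sigma)$ space. Correctness of the placement — that $\S$ is exactly the array of the definition — follows because the scan visits positions in increasing order and the block reserved for $a_i$ has length exactly $|w|_{a_i}=\C[\ind(a_i)+1]-\C[\ind(a_i)]$ by construction of $\C$, so the pointers never overflow their blocks nor collide.

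Summing the phases gives total time $O(n)+O(\sigma)+O(n)=O(n+\sigma)$, and the total space — dominated by $\S$ of size $n$ together with the $O(\sigma)$-sized array $\C$ and the counters and pointers — is $O(n+\sigma)$. I expect the only genuinely delicate point to be verifying that each step of both scans is really $O(1)$, since it is precisely the constant-time index lookup and pointer update that prevent the per-symbol cost from degrading to $O(\sigma)$ and keep the scans linear rather than $O(n\sigma)$; everything else is routine bookkeeping.
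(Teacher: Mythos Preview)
Your argument is correct and follows essentially the same route as the paper: both amount to counting the iterations of the loops (two of order $\sigma$, one of order $n$) and noting that each iteration does $O(1)$ work. The paper's proof is terser --- it does not separately discuss the frequency-tabulation pass (the algorithm already presumes $\PVW$ is available) nor correctness of the placement in $\S$, and it leaves the space bound implicit --- but the substance is the same.
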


\begin{proof}
 The time complexity comes from the fact that
 the \textbf{for} loops in lines
 \ref{algo-computeSel-line1}--\ref{algo-computeSel-line2} and
 \ref{algo-computeSel-line3}--\ref{algo-computeSel-line4}
 are executed $O(\sigma)$ times,
 the \textbf{for}  loop in lines
 \ref{algo-computeSel-line5}--\ref{algo-computeSel-line6}
 is executed $n$ times, and all the other instructions take constant time.
\end{proof}

Once the arrays $C_{w}$ and $S_{w}$ have been computed, each call to the $\sel$ function is
 answered in constant time.

\begin{figure}[h]
\begin{center}
\begin{algo}{ComputeSelect}{w,n}
\SET{\C[1]}{1}
\label{algo-computeSel-line1}
\DOFORI{i}{2}{\sigma+1}
\label{algo-computeSel-line2}
	\SET{\C[i]}{\C[i-1]+\PVW[i-1]}
\OD
\label{algo-computeSel-line3}
\DOFORI{i}{1}{\sigma}
\label{algo-computeSel-line4}
	\SET{P[i]}{0}
\OD
\label{algo-computeSel-line5}
\DOFORI{i}{1}{n}
	\SET{\S[\C[\ind(w[i])]+P[\ind(w[i])]]}{i}
\label{algo-computeSel-line6}
	\SET{P[\ind(w[i])]}{P[\ind(w[i])]+1}
\OD
\RETURN{(\C,\S)}
\end{algo}

\caption{\label{algo-computeSel}Algorithm computing the arrays $\C$ and $\S$.}
\end{center}
\end{figure} 
 
The Brute-Force algorithm tests all possible pairs $(h,p)$, but it is clear
 that, for a given value of $h$, some pairs $(h,p)$ cannot be  Abelian periods of $w$.
For example, let $w=\sa{abaaaaabaa}$ and $h=2$.
Since $\PVW(1,h)$ has to be contained in $\PVW(h+1,p)$, 
 the pairs $(2,3)$, $(2,4)$ and $(2,5)$ cannot be Abelian periods of $w$.
 Indeed, the minimal value of $p$ such that $(2,p)$ can be an Abelian period of $w$ is 
 $6$, in order to contain the second $\sa{b}$ of $w$.
 
This remark leads us to introduce two arrays, $\MW$ and $\GW$, which allow one to skip, for each value $h$ of the head, a number of values of $p$ that are not compatible with $h$.

The array $\MW$ is defined as follows:

\begin{definition}
\label{def-M}
Let $w$ be a word of length $n$ over the alphabet $\Sigma$.
Then $\forall\, 0\leqslant h\leqslant\lfloor(n-1)/2\rfloor$,
$\MW[h]$ is defined by
 $$
\MW[h]=\\
 \begin{cases} \min\{p \mid \PVW(1,h) \subset \PVW(h+1,p) \}& \mbox{ if } \forall\, a \in \Sigma,\, 2\times\vert w[1\pp h]\vert_a \leqslant \vert w\vert_a,\\
      -1 &\mbox{ otherwise.}
 \end{cases}
$$
\end{definition}

In other words, if
$\sel_a(w,2\times\vert w[1\pp h]\vert_a)$ is
 defined for all the letters $a \in \Sigma$, then
$$\MW[h]=\max \{h+1, \max\{\selW(w,2\times\vert w[1\pp h]\vert_a)\mid a\in\Sigma\} -h  \};$$
otherwise, $\MW[h]=-1$.

The algorithm \CALL{ComputeM}{w,n,\C,\S} (see \figurename{~\ref{algo-computeM}})
 builds the array $\MW[h]$ processing the positions of $w$ from left to right.
 
 \begin{figure}[h]
\begin{center}
\begin{algo}{ComputeM}{w,n,\C,\S}
\SET{\MW[0]}{0}
\label{algoM-line1}
\DOFOR{a\in\Sigma}
\label{algoM-line2}
	\SET{H[a]}{0}
\OD
\label{algoM-line3}
\DOFORI{h}{1}{\lfloor\frac{n-1}{2}\rfloor}
	\SET{H[w[h]]}{H[w[h]]+1}
	\SET{s}{\sel_{w[h]}(w,2\times H[w[h]])}
	\IF{s \mbox{ is defined }}
		\SET{\MW[h]}{\max\{\MW[h-1]-1,s-h\}}
\label{algoM-line4}
	\ELSE
		\SET{\MW[h]}{-1}
	\FI
\OD
\label{algoM-line5}
\DOFORI{h}{1}{\lfloor\frac{n-1}{2}\rfloor}
	\IF{\MW[h]=h}
\label{algoM-line6}
		\SET{\MW[h]}{h+1}
	\FI
\OD
\RETURN{\MW}
\end{algo}

\caption{\label{algo-computeM}Algorithm computing the array $\MW$.}
\end{center}
\end{figure}

\begin{proposition}
Algorithm \CALL{ComputeM}{w,n,\C,\S} computes the array $\MW$ in time and space $O(n+\sigma)$.
\end{proposition}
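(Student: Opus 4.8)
The plan is to establish the two claims of the proposition separately: correctness of the output array and the $O(n+\sigma)$ resource bound. For correctness I would run an induction on $h$ that tracks the two loops of the algorithm. First I would fix the invariant for the counter array: immediately after the increment \SET{H[w[h]]}{H[w[h]]+1} at iteration $h$ one has $H[a]=|w[1\pp h]|_a$ for every $a\in\Sigma$, since $H$ is initialised to $0$ and exactly one entry, the one indexed by $w[h]$, is incremented per iteration. Hence the value $s$ computed at iteration $h$ equals $\sel_{w[h]}(w,2\times|w[1\pp h]|_{w[h]})$, and it is undefined precisely when $2\times|w[1\pp h]|_{w[h]}>|w|_{w[h]}$.

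Write $T[h]=\max\{\sel_a(w,2\times|w[1\pp h]|_a)\mid a\in\Sigma\}$ for those $h$ at which every term is defined. The structural heart of the argument is that, passing from $h-1$ to $h$, only the count of the single letter $w[h]$ grows (by one) while all other counts are unchanged; since $\sel_a(w,\cdot)$ is nondecreasing in its second argument, the term for $w[h]$ can only increase and the remaining terms are untouched, so $T[h]=\max\{T[h-1],s\}$. Subtracting $h$ and rewriting $T[h-1]-h=(T[h-1]-(h-1))-1$ turns this into exactly the update $\max\{\MW[h-1]-1,s-h\}$ of line~\ref{algoM-line4}. With the base case $\sel_a(w,0)=0$ for all $a$, giving $T[0]=0=\MW[0]$ (line~\ref{algoM-line1}), the induction shows that after the first loop $\MW[h]=T[h]-h$ at every head preceding the first undefined $s$. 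For the $-1$ branch I would use that the definition's condition $\forall a,\,2\times|w[1\pp h]|_a\leqslant|w|_a$ is monotone: its left-hand sides are nondecreasing in $h$, so once it fails it fails for all larger $h$, and at the \emph{first} failing index the offending letter must be $w[h]$ itself, the only count that changed since $h-1$. Thus the first failure of the condition coincides with the first undefined $s$, where the algorithm correctly sets $\MW[h]=-1$; the remaining point to verify is that the $-1$ is output throughout the whole failing suffix.

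The step I expect to be the most delicate is justifying the second loop (lines~\ref{algoM-line5}--\ref{algoM-line6}), which overwrites $\MW[h]$ by $h+1$ only when $\MW[h]=h$, instead of carrying out the full $\max\{h+1,\cdot\}$ of the ``in other words'' formula. This is legitimate because of the inequality $T[h]\geqslant 2h$, equivalently $T[h]-h\geqslant h$. To prove it I would set $m=T[h]$; by the definition of $\sel$ and of the maximum, the prefix $w[1\pp m]$ contains at least $2\times|w[1\pp h]|_a$ occurrences of each letter $a$ (with equality for the maximising letter), whence $m=\sum_{a}|w[1\pp m]|_a\geqslant 2\sum_a|w[1\pp h]|_a=2h$. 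Therefore the first-loop value never drops below $h$, so $\max\{h+1,\MW[h]\}$ equals $h+1$ exactly when $\MW[h]=h$ and equals $\MW[h]$ otherwise, which is precisely the effect of the test in line~\ref{algoM-line6}.

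Finally, for the complexity I would simply count the loops: the initialisation (lines~\ref{algoM-line2}--\ref{algoM-line3}) runs $O(\sigma)$ times, and the two passes over the heads each run $\lfloor(n-1)/2\rfloor=O(n)$ times; every $\sel$ query costs $O(1)$ once $\C$ and $\S$ are available, as already established, and every other instruction is constant-time, giving $O(n+\sigma)$ time. The working memory is the array $H$ of size $\sigma$ together with the output array $\MW$ of size $O(n)$, on top of the given inputs, so the space is $O(n+\sigma)$ as claimed.
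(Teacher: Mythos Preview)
Your argument is correct and considerably more thorough than the paper's own proof, which simply asserts that ``the correctness of the algorithm comes directly from Definition~\ref{def-M}'' and then counts the loops exactly as you do for the complexity bound. In particular, the inequality $T[h]\geqslant 2h$ that you prove to justify the second loop, and the recurrence $T[h]=\max\{T[h-1],s\}$ that you derive from the fact that only the count of the letter $w[h]$ changes, are the substantive points behind the algorithm; the paper leaves both implicit.

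The one point you flag as ``remaining to verify''---that $-1$ is output throughout the failing suffix---is worth a comment: as written, the algorithm does \emph{not} guarantee this. If $\MW[h_0]=-1$ because $\sel_{w[h_0]}$ was undefined, then at some later $h>h_0$ with $w[h]\ne w[h_0]$ the value $s$ may well be defined, and line~\ref{algoM-line4} produces $\max\{\MW[h-1]-1,s-h\}=s-h>0$ rather than $-1$. So the array does not match Definition~\ref{def-M} on that suffix. This is harmless in context, since \textsc{AbelianPeriod-Select} exits its outer loop at the first $-1$ (Proposition~\ref{prop-minP}) and never reads those later entries; but strictly speaking your caution is justified, and the paper's one-line correctness claim glosses over it. A similar edge case is $\MW[0]$, which the algorithm leaves at $0$ while Definition~\ref{def-M} gives $1$; again irrelevant for the caller, which takes a $\max$ with $\lfloor(\GW[0]+1)/2\rfloor$.
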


\begin{proof}
The correctness of the algorithm comes directly from Definition~\ref{def-M}.
The time complexity comes from the fact that the \textbf{for} loop in lines
 \ref{algoM-line1}--\ref{algoM-line2} is executed $\sigma$ times,
 the  \textbf{for} loops in lines
 \ref{algoM-line3}--\ref{algoM-line4} and \ref{algoM-line5}--\ref{algoM-line6}
 are executed $O(n)$ times, and all the other instructions take constant time.
\end{proof}

\begin{proposition}
\label{prop-minP}
Let $w$ be a word of length $n$ over the alphabet $\Sigma$, and $h$ such that 
 $0\leqslant h\leqslant \lfloor (n-1)/2\rfloor$.
If $\MW[h]=-1$, then $\forall\, h' \geqslant h$ one has $\MW[h']=-1$, and $h'$
 cannot be the length of the head of an Abelian period of $w$.
\end{proposition}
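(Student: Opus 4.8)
The plan is to first rewrite the condition $\MW[h]=-1$ as a transparent statement about letter counts, and then derive both assertions from that reformulation.

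By Definition~\ref{def-M}, $\MW[h]=-1$ holds precisely when the guard ``$\forall\, a\in\Sigma$, $2\times|w[1\pp h]|_a \leqslant |w|_a$'' fails, i.e.\ when there is a letter $a^\ast\in\Sigma$ with $2\times|w[1\pp h]|_{a^\ast} > |w|_{a^\ast}$. In words, the prefix $w[1\pp h]$ already contains strictly more than half of the occurrences of $a^\ast$ in $w$. I would fix such a letter $a^\ast$ once and for all. For the monotonicity claim, take any $h'\geqslant h$. Since $w[1\pp h]$ is a prefix of $w[1\pp h']$, we have $|w[1\pp h']|_{a^\ast}\geqslant |w[1\pp h]|_{a^\ast}$, whence $2\times|w[1\pp h']|_{a^\ast} > |w|_{a^\ast}$ as well. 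Thus the guard of Definition~\ref{def-M} also fails at $h'$, giving $\MW[h']=-1$; this settles the first part with no computation.

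For the second part I would argue by contradiction: suppose some $h'\geqslant h$ were the head length of an Abelian period $(h',p)$, so that $w=u_0u_1\cdots u_{k-1}u_k$ with $u_0=w[1\pp h']$ and $\PV_{u_0}\subset\PV_{u_1}=\cdots=\PV_{u_{k-1}}\supset\PV_{u_k}$. This chain forces at least one full body factor, so $k\geqslant 2$, and the containment $\PV_{u_0}\subset\PV_{u_1}$ gives $|u_0|_{a^\ast}\leqslant|u_1|_{a^\ast}$. Counting the occurrences of $a^\ast$ through the whole factorization,
\[
|w|_{a^\ast}=|u_0|_{a^\ast}+(k-1)\,|u_1|_{a^\ast}+|u_k|_{a^\ast}
\geqslant |u_0|_{a^\ast}+|u_1|_{a^\ast}\geqslant 2\,|u_0|_{a^\ast}=2\times|w[1\pp h']|_{a^\ast},
\]
which contradicts $2\times|w[1\pp h']|_{a^\ast}>|w|_{a^\ast}$.

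The proof is short, so the only point needing care --- the ``main obstacle'' --- is verifying that any genuine Abelian period contributes at least one complete body factor $u_1$ whose $a^\ast$-count dominates that of the head; this is exactly where the strict containments in Definition~\ref{def-ap} are used, since they exclude the degenerate case $k\leqslant 1$. With this in hand, the ``more than half'' reformulation makes both the persistence of $\MW[h']=-1$ and the impossibility of a head of length $h'$ immediate.
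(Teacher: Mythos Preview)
Your proof is correct and follows essentially the same approach as the paper's: both unpack $\MW[h]=-1$ as the existence of a letter $a^\ast$ with $2\,|w[1\pp h]|_{a^\ast}>|w|_{a^\ast}$, observe that this inequality persists for all longer prefixes, and conclude that the head cannot be matched by a body factor. Your version is simply more explicit---you spell out the count over the full factorization and justify $k\geqslant 2$, whereas the paper compresses this into the single remark that no $p$ can satisfy $|w[1\pp h]|_a\leqslant|w[(h+1)\pp(h+p)]|_a$.
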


\begin{proof}
If $\MW[h]=-1$, then by definition there exists a letter $a \in \Sigma$ such that
 $2\times\vert w[1\pp h]\vert_a > \vert w\vert_a$.
Therefore, one cannot find a value $p$ such that
 $\vert w[1\pp h]\vert_a \leqslant \vert w[(h+1)\pp (h+p)]\vert_a$.
It is clear that this is also true for any value $h' >h$.
 \end{proof}

The array $\GW$ is defined as follows:

\begin{definition}
\label{def-G}
Let $w$ be a word of length $n$ over the alphabet $\Sigma$.
Then, for every $h$ such that $0\leqslant h\leqslant\lfloor(n-1)/2\rfloor$, 
 $\GW[h]$ is defined by
\[
 \GW[h] = \max\{\selW(w,i+1)-\selW(w,i)\mid a\in\Sigma,\quad h<\selW(w,i)<\selW(w,i+1)\leqslant n\}.
\]

\end{definition}

In fact, $\GW[h]$ is the maximal value $j'-j$ such that $h< j < j'$ and
 $w[j]=w[j']$, for some $j$ and $j'$.

The array $\GW$ can be computed by the algorithm \CALL{ComputeG}{w,n}
 (see \figurename~\ref{algo-computeG}) processing the
 positions of $w$ from right to left.
 
 \begin{figure}[h]
\begin{center}
\begin{algo}{ComputeG}{w,n}
\SET{\GW[n]}{0}
\label{algoG-line1}
\DOFOR{a\in\Sigma}
\label{algoG-line2}
	\SET{T[a]}{0}
\OD
\label{algoG-line3}
\DOFORI{h}{n}{1}
	\IF{T[w[h]]=0}
		\SET{T[w[h]]}{h}
		\SET{\GW[h-1]}{\GW[h]}
	\ELSE
		\SET{d}{T[w[h]]-h}
		\SET{T[w[h]]}{h}
\label{algoG-line4}
		\SET{\GW[h-1]}{max\{\GW[h],d\}}
	\FI
\OD
\RETURN{\GW}
\end{algo} 
\caption{\label{algo-computeG}Algorithm computing the array $\GW$.}
\end{center}
\end{figure}
 
\begin{proposition}
Algorithm \CALL{ComputeG}{w,n} computes the array $\GW$ in time and space
 $O(n+\sigma)$.
\end{proposition}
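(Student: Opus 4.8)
The plan is to prove correctness through a loop invariant for the right-to-left scan, and then to read off the complexity by counting loop iterations, exactly as in the analogous propositions for \textsc{ComputeSelect} and \textsc{ComputeM}. First I would pin down the meaning of the auxiliary array $T$. The invariant I would maintain is that, at the moment the main loop is about to process index $h$, the entry $T[a]$ holds the position of the leftmost occurrence of the letter $a$ in the suffix $w[(h+1)\pp n]$, with the convention $T[a]=0$ when $a$ does not occur there. This holds before the first iteration ($h=n$: the suffix is empty and every $T[a]=0$) and is preserved, since each iteration sets $T[w[h]]$ to $h$, which is precisely the new leftmost occurrence of $w[h]$ in $w[h\pp n]$, and leaves the other entries untouched. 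The consequence I would emphasise is that whenever $T[w[h]]\neq 0$, the stored value is the occurrence of $w[h]$ \emph{immediately following} $h$; hence $d=T[w[h]]-h$ is exactly a difference $\selW(w,i+1)-\selW(w,i)$ between two \emph{consecutive} occurrences of $w[h]$, with $\selW(w,i)=h$.

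Next I would prove, by induction on $h$ decreasing from $n$ to $1$, that after the iteration processing index $h$ the entry $\GW[h-1]$ equals the value prescribed by Definition~\ref{def-G}: the maximum gap between two consecutive occurrences of a common letter whose left endpoint lies at a position strictly greater than $h-1$, with the empty maximum taken to be $0$. The base case is the initialization $\GW[n]=0$, which matches the empty constraint set for the index $n$. For the inductive step there are two cases. If $T[w[h]]=0$, the letter $w[h]$ has no occurrence to the right of $h$, so $h$ opens no consecutive pair, the admissible set for left endpoints $\geqslant h$ coincides with that for left endpoints $\geqslant h+1$, and the algorithm correctly sets $\GW[h-1]=\GW[h]$. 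Otherwise the only newly admissible gap has left endpoint exactly $h$ and value $d$ as above, so $\GW[h-1]=\max\{\GW[h],d\}$ is again correct. Since the scan reaches $h=1$, all entries $\GW[0],\dots,\GW[n]$ are produced, covering in particular the range $0\leqslant h\leqslant\lfloor(n-1)/2\rfloor$ required by the definition.

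Finally, for the complexity, the initialization loop in lines~\ref{algoG-line2}--\ref{algoG-line3} is executed $\sigma$ times and the main loop in lines~\ref{algoG-line3}--\ref{algoG-line4} is executed $n$ times, each iteration performing only a constant number of array accesses, comparisons and assignments; hence the running time is $O(n+\sigma)$. The space is that of the output array $\GW$ of $n+1$ cells together with the auxiliary array $T$ of $\sigma$ cells, that is $O(n+\sigma)$. The one delicate point—and the only place where care is genuinely needed—is the argument that $T$ always stores the \emph{immediately} following occurrence, so that $d$ is a difference of \emph{consecutive} $\selW$ values as in Definition~\ref{def-G} rather than a gap between an arbitrary pair of equal letters; once this observation is secured, the remaining induction is routine.
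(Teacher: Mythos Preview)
Your proof is correct and takes the same approach as the paper: correctness derived from Definition~\ref{def-G} and complexity read off from the $\sigma$-fold initialization loop plus the $n$-fold main loop, each iteration doing constant work. The paper's own argument is considerably terser---it simply asserts that correctness ``comes directly'' from the definition---so your explicit loop invariant for $T$ and the downward induction on $h$ amount to filling in details the paper leaves implicit rather than taking a different route.
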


\begin{proof}
The correctness of the algorithm comes directly from Definition~\ref{def-G}.
The time complexity comes from the fact that the \textbf{for} loop in lines
 \ref{algoG-line1}--\ref{algoG-line2} is executed $\sigma$ times,
 the \textbf{for} loop in lines
 \ref{algoG-line3}--\ref{algoG-line4}
 is executed $n$ times, and all the other instructions take constant time.
\end{proof}

\begin{proposition}
\label{prop-minimum}
Let $w$ be a word of length $n$ over the alphabet $\Sigma$.
For every $h$ such that $0\leqslant h\leqslant\lfloor(n-1)/2\rfloor$, if $p$ is such that $h < p < \max\{\MW[h],\lfloor(\GW[h]+1)/2\rfloor\}$, then $(h,p)$ is not an
 Abelian period of $w$.
\end{proposition}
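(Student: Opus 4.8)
The plan is to argue by contradiction: I assume that $(h,p)$ is an Abelian period with $h<p<\max\{\MW[h],\lfloor(\GW[h]+1)/2\rfloor\}$ and derive a contradiction. The starting point is the elementary fact that $p<\max\{A,B\}$ holds if and only if $p<A$ or $p<B$; hence the hypothesis splits into the two cases $p<\MW[h]$ and $p<\lfloor(\GW[h]+1)/2\rfloor$, which I would treat separately. Throughout, write $u_0=w[1\pp h]$, $u_1=w[(h+1)\pp(h+p)]$, and in general $u_i=w[(h+(i-1)p+1)\pp(h+ip)]$ for $1\leqslant i\leqslant k-1$, so that being an Abelian period forces all middle blocks $u_1,\dots,u_{k-1}$ to share a common Parikh vector $\PV$ and forces $\PV_{u_0}\subset\PV_{u_1}$.

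The case $p<\MW[h]$ is immediate. Note first that $p>h\geqslant 0$ gives $p\geqslant 1$, so $p<\MW[h]$ forces $\MW[h]\neq-1$; thus $\MW[h]$ is the genuine minimum $\min\{p'\mid \PVW(1,h)\subset\PVW(h+1,p')\}$ of Definition~\ref{def-M}. Since each component of $\PVW(h+1,p')$ is nondecreasing in $p'$ while the norm condition $|\PVW(h+1,p')|=p'>h=|\PVW(1,h)|$ already holds for every $p'>h$, the containment $\PVW(1,h)\subset\PVW(h+1,p')$ is monotone in $p'$ over the range $p'>h$. Hence $p<\MW[h]$ yields $\PVW(1,h)\not\subset\PVW(h+1,p)$, contradicting the requirement $\PV_{u_0}\subset\PV_{u_1}$ of Definition~\ref{def-ap}.

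The case $p<\lfloor(\GW[h]+1)/2\rfloor$ is the heart of the argument. I would first rewrite the bound in the equivalent form $2p<\GW[h]$, using that, for integer $p$, the inequality $p<\lfloor(g+1)/2\rfloor$ is the same as $2p\leqslant g-1$. Set $g=\GW[h]$ and, by Definition~\ref{def-G}, pick a letter $a$ and two consecutive occurrences of $a$ at positions $j<j'$ with $h<j$, $j'\leqslant n$ and $j'-j=g$; by consecutiveness none of the $g-1$ positions in the open interval $(j,j')$ carries the letter $a$. Since $2p<g$ gives $g-1\geqslant 2p$, this $a$-free window is at least $2p$ long, so it must fully contain one of the length-$p$ blocks: taking $i$ minimal with $h+(i-1)p\geqslant j$ places the block $u_i$ entirely inside $(j,j')$, whence $|u_i|_a=0$ and therefore $\PV[a]=0$. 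On the other hand the position $j$ itself lies in an earlier block $u_{i_0}$ with $i_0<i$, and $w[j]=a$ gives $|u_{i_0}|_a\geqslant 1$, i.e.\ $\PV[a]\geqslant 1$---a contradiction.

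The main obstacle is the combinatorial bookkeeping of this last case, which has two parts. First, I must check that the minimal choice of $i$ really drops the whole block $u_i$ into $(j,j')$: minimality gives $h+(i-1)p\in[\,j,\,j+p-1\,]$, so $u_i$ starts at $h+(i-1)p+1>j$ and ends at $h+ip\leqslant j+2p-1\leqslant j'-1$, the last step being exactly where the factor $2p$, and hence the halving in $\lfloor(\GW[h]+1)/2\rfloor$, is consumed. Second, I must verify that $u_i$ and $u_{i_0}$ are genuine middle blocks and not the head $u_0$ or the tail $u_k$: the left endpoint of $u_i$ exceeds $h$ so it is not the head, while $h+ip<j'\leqslant n$ forces $i\leqslant k-1$ so it is not the tail, and the same bounds give $1\leqslant i_0<i\leqslant k-1$. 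Once these index bounds are in place, the common Parikh vector of the middle blocks is contradicted by $\PV[a]=0$ against $\PV[a]\geqslant 1$, completing the proof.
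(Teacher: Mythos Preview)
Your proof is correct and follows the same two-case split as the paper's own argument: the case $p<\MW[h]$ is handled directly from the definition, and for $p<\lfloor(\GW[h]+1)/2\rfloor$ you locate a middle block lying entirely between two consecutive occurrences $j,j'$ of some letter $a$, which is precisely what the paper does (its $k=\min\{k'\mid h+k'p\geqslant j\}$ is your $i-1$). The only difference is cosmetic: the paper stops at ``$|w[h+kp+1\pp h+(k+1)p]|_a=0$, therefore $(h,p)$ cannot be an Abelian period,'' whereas you explicitly exhibit the contradictory middle block $u_{i_0}$ containing position $j$ and verify the index bounds $1\leqslant i_0<i\leqslant k-1$ that ensure both blocks are genuinely middle blocks rather than head or tail---details the paper leaves to the reader and to Figure~\ref{fig-gap}.
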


\begin{proof}
From the definition of $\MW[h]$, it directly follows that if $ p <\MW[h]$,
 then $(h,p)$ cannot be an Abelian period of $w$.

Given $h$, let $a \in \Sigma$ be such that there exists $1\leqslant i < n$ and
 $\selW(w,i+1) - \selW(w,i) = \GW[h]$.
Let $j= \selW(w,i) $ and $j'=\selW(w,i+1)$.
If  $p <\lfloor(\GW[h]+1)/2\rfloor$, setting $k=\min\{k' \mid h+k'p \geqslant j\}$,
 then $h+(k+1)p< j'$ and
 $|w[h+kp+1 \pp h+(k+1)p]|_a = 0$.
Therefore, $(h,p)$ cannot be an Abelian period of $w$ (see \figurename~{\ref{fig-gap}}).
 \end{proof}

\begin{figure}
\begin{center}
\begin{tikzpicture}

\draw (0,0) rectangle (5,.5);
\draw (5,0) rectangle (5.5,.5);
\draw (5.5,0) rectangle (10,.5);
\draw (10,0) rectangle (10.5,.5);
\draw (10.5,0) rectangle (15,.5);

\node at (5.25,.75) {$j$};
\node at (5.25,.25) {$a$};
\node at (8,.25) {no $a$};
\node at (10.25,.75) {$j'$};
\node at (10.25,.25) {$a$};

\draw (0,-1.5) rectangle (1,-1);
\draw (1,-1.5) rectangle (3,-1);
\draw[very thick,dotted] (3.5,-1.25) -- (4,-1.25);

\draw (4.5,-1.5) rectangle (6.5,-1);
\draw (6.5,-1.5) rectangle (8.5,-1);

\node at (1,-0.5) { $h$};

\node at (6.5,-0.5) {$h+kp$};
\node at (8.5,-0.5) {$h+(k+1)p$};
\node at (7.5,-1.25) {no $a$};

\draw[<->] (1,-1.75) -- (3,-1.75) node[right] {};
\node at (2,-2) {$ p$};

\draw[<->] (5.25,1.25) -- (10.25,1.25) node[right] {};
\node at (8,1.5) {$ >2p$};

\draw[<->] (4.5,-1.75) -- (6.5,-1.75) ;
\node at (5.5,-2) {$p$};

\draw[<->] (6.5,-1.75) -- (8.5,-1.75) ;
\node at (7.5,-2) {$p$};

\end{tikzpicture} 
\caption{\label{fig-gap} If the distance between two consecutive $a$'s in $w$
 is greater than $2p$, then $(h,p)$ cannot be an Abelian period of $w$, for any
 $h<p$.}
 \end{center}
\end{figure}

Arrays $\MW$ and $\GW$ give, for every head length $h$, a minimal value for a
 possible $p$ such that $(h,p)$ can be an Abelian period of $w$.
This allows us to skip a number of values for $p$ that cannot give an Abelian
 period. Our next off-line algorithm based on the $\sel$ function will make use of these arrays.

The following lemma shows how to check if $(h,p)$ is an Abelian period
 of $w$ (except for the tail) using the $\sel$ function.

\begin{lemma}
\label{lemma-select}
Let $w$ be a word of length $n$ over the alphabet $\Sigma$.
Let $\mathcal{H}=\PVW(1,h)$ and $\mathcal{P}=\PVW(h+1,p)$.
Let  $i=h+kp$ such that $0<k$, $p\leqslant n-i$, and 
 $(h,p)$ is an Abelian period of $w[1\ldotp\ldotp i]$ (with an empty tail).
Then the following two conditions are equivalent:
\begin{enumerate}
\item $(h,p)$ is an Abelian period of $w[1\ldotp\ldotp i+p]$;
\item for all $a\in\Sigma$
$$
 \selW(w,\mathcal{H}[\ind(a)]+\left(1+\left\lfloor\frac{i}{p}\right\rfloor\right)\times \mathcal{P}[\ind(a)])\leqslant i+p.
$$
\end{enumerate}
\end{lemma}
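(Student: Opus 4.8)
The plan is to translate both conditions into statements about letter counts in prefixes of $w$ and then exploit the fact that the block $w[i+1\pp i+p]$ has fixed length $p$. First I would record two elementary facts. Since the hypothesis that $(h,p)$ is an Abelian period of $w[1\pp i]$ forces $\mathcal{H}\subset\mathcal{P}$ (with $\mathcal{H}=\PVW(1,h)$ the Parikh vector of the head and $\mathcal{P}=\PVW(h+1,p)$ that of the first block), we have $h<p$, so that $\lfloor i/p\rfloor=\lfloor(h+kp)/p\rfloor=k$ and the index occurring in condition~2 is exactly $N_a:=\mathcal{H}[\ind(a)]+(k+1)\,\mathcal{P}[\ind(a)]$. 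Moreover, the assumption that $(h,p)$ is an Abelian period of $w[1\pp i]$ with empty tail means that $w[1\pp i]$ splits into a head of length $h$ followed by $k$ blocks each of Parikh vector $\mathcal{P}$; hence $|w[1\pp i]|_a=\mathcal{H}[\ind(a)]+k\,\mathcal{P}[\ind(a)]$ for every $a\in\Sigma$.

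Next I would reinterpret the two conditions. By the defining property of $\selW$, the inequality $\selW(w,N_a)\leqslant i+p$ holds precisely when $w[1\pp i+p]$ contains at least $N_a$ occurrences of $a$, i.e.\ $|w[1\pp i+p]|_a\geqslant N_a$ (the convention $\selW(w,0)=0$ covers the boundary case $N_a=0$, and the hypothesis $p\leqslant n-i$ guarantees that the block $w[i+1\pp i+p]$ is well defined). Subtracting the count in $w[1\pp i]$ recorded above, condition~2 is therefore equivalent to $|w[i+1\pp i+p]|_a\geqslant\mathcal{P}[\ind(a)]$ for all $a\in\Sigma$. On the other hand, condition~1 says precisely that the block $w[i+1\pp i+p]$ has Parikh vector $\mathcal{P}$, that is, $|w[i+1\pp i+p]|_a=\mathcal{P}[\ind(a)]$ for all $a$.

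With this reformulation the implication $(1)\Rightarrow(2)$ is immediate, since termwise equality trivially yields the termwise inequalities. For $(2)\Rightarrow(1)$, which is the heart of the argument, I would sum the inequalities $|w[i+1\pp i+p]|_a\geqslant\mathcal{P}[\ind(a)]$ over all $a\in\Sigma$: the left-hand side sums to $|w[i+1\pp i+p]|=p$, while the right-hand side sums to $|\mathcal{P}|=p$ because $\mathcal{P}=\PVW(h+1,p)$ has norm $p$. A collection of nonnegative quantities dominating another collection of equal total must agree term by term, so $|w[i+1\pp i+p]|_a=\mathcal{P}[\ind(a)]$ for every $a$, which is exactly condition~1. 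I expect this conservation-of-length step—upgrading the termwise inequalities to equalities via the fixed block length $p$—to be the main (and essentially the only nonroutine) point of the proof.
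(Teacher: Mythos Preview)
Your proof is correct and follows the same route as the paper: translate condition~2 into a lower bound on the letter counts of $w[1\pp i+p]$, subtract the known counts of $w[1\pp i]$, and deduce that the block $w[i+1\pp i+p]$ has Parikh vector $\mathcal{P}$. The only difference is that the paper jumps directly from the $\sel$ inequality to the \emph{equality} $|w[1\pp i+p]|_a=\mathcal{H}[\ind(a)]+(k+1)\,\mathcal{P}[\ind(a)]$ without justification, whereas you make explicit the conservation-of-length step (summing the termwise inequalities to $p=p$) that upgrades $\geqslant$ to $=$; your version is the more carefully argued of the two.
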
 
 
\begin{proof}
Since $(h,p)$ is an Abelian period of $w[1\ldotp\ldotp i]$ with $i=h+kp$ for
 some $k>0$, then
 $|w[1\ldotp\ldotp i]|_a=\mathcal{H}[\ind(a)]+k\times \mathcal{P}[\ind(a)]$ 
 for each letter $a\in\Sigma$. Notice that since $h<p$ we have $k=\lfloor i/p\rfloor$.\\
($1\Rightarrow 2$).
The fact that $(h,p)$ is an Abelian period of
 $w[1\ldotp\ldotp i+p]$ implies that, for all $a\in\Sigma$,
 $|w[1\ldotp\ldotp i+p]|_a=\mathcal{H}[\ind(a)]+(k+1)\times \mathcal{P}[\ind(a)]$.
Thus, by definition of \sel, we have 
 $\selW(w,\mathcal{H}[\ind(a)]+(1+\lfloor i/p\rfloor)\times \mathcal{P}[\ind(a)])\leqslant i+p$.\\
($2\Rightarrow 1$).
The fact that
 $\selW(w,\mathcal{H}[\ind(a)]+(1+\lfloor i/p\rfloor)\times \mathcal{P}[\ind(a)])\leqslant i+p$
 implies that
 $|w[1\ldotp\ldotp i+p]|_a=\mathcal{H}[\ind(a)]+(k+1)\times \mathcal{P}[\ind(a)]$.
We know that
 $|w[1\ldotp\ldotp i]|_a=\mathcal{H}[\ind(a)]+k\times \mathcal{P}[\ind(a)]$.
By difference, $|w[i+1\ldotp\ldotp i+p]|_a=\mathcal{P}[\ind(a)]$.
Since this is true for all $a\in\Sigma$, we have 
 $\PVW(i+1,p)=\mathcal{P}$, and therefore $(h,p)$ is an Abelian period
 of $w[1\ldotp\ldotp i+p]$.
 \end{proof} 

\figurename~\ref{algo-shift} presents the algorithm \textsc{AbelianPeriod-Shift}
 based on the previous lemma.  
 
\begin{figure}[h]
\begin{center}
\input{shift.tex}
\caption{\label{algo-shift}Algorithm checking whether $(h,p)$ is an Abelian
 period of the prefix of $w$ of 
 length $ n- ((n-h) \bmod p)$.}
 \end{center}
\end{figure} 
 
\begin{proposition}\label{prop:APS}
\label{prop-shift}
Algorithm \CALL{AbelianPeriod-Shift}{h,p,w,n,\C,\S} returns {\sc true} if and only if 
 $(h,p)$ is an Abelian period of the prefix of $w$ of 
 length $ n- ((n-h) \bmod p)$ in time $O(n/p\times\sigma)$ and space $O(\sigma)$.
\end{proposition}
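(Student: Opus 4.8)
The plan is to prove the two claims of Proposition~\ref{prop-shift} separately: first the correctness equivalence, then the complexity bounds. For \textbf{correctness}, I would argue by induction on the loop variable $i$, using Lemma~\ref{lemma-select} as the inductive engine. Before the \textbf{while} loop, we have $i=h+p$, and the fact that $(h,p)$ is an Abelian period of $w[1\pp h+p]$ holds trivially by construction: the head $u_0=w[1\pp h]$ has norm $h$, and $u_1=w[h+1\pp h+p]$ has norm $p$, with $\PV_{u_0}\subset\PV_{u_1}$ being guaranteed by the preconditions on $(h,p)$ (or it can be treated as the base case, where $k=1$). The loop invariant I would maintain is: \emph{at the top of each iteration, $(h,p)$ is an Abelian period of $w[1\pp i]$ with empty tail, where $i=h+kp$ for the current $k$}. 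Each iteration tests, via the $\selW$ calls, exactly condition~(2) of Lemma~\ref{lemma-select} for the current $i$; the \textbf{for} loop over $a\in\Sigma$ returns \false\ precisely when condition~(2) fails for some letter. Hence by the lemma, the iteration extends the Abelian period from $w[1\pp i]$ to $w[1\pp i+p]$ if and only if no \false\ is returned, and then $i$ is updated to $i+p$, preserving the invariant.

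The loop terminates when $i+p>n$, that is, when fewer than $p$ letters remain; these remaining $n-i$ letters form the tail, whose length is $n-i=(n-h)\bmod p$, matching the stated prefix length $n-((n-h)\bmod p)=i$. Thus upon returning \true, the invariant certifies that $(h,p)$ is an Abelian period of $w[1\pp i]$ with empty tail, which is exactly the prefix of length $n-((n-h)\bmod p)$. Conversely, if $(h,p)$ is an Abelian period of that prefix, then by the forward direction of Lemma~\ref{lemma-select} condition~(2) holds at every stage, so no $\selW$ call ever exceeds $i+p$ nor is undefined, and the algorithm runs through to return \true. This establishes the biconditional. The one subtlety I would be careful about is the \emph{undefined} case in the \textbf{if} test: when $\selW$ is undefined the required occurrence simply does not exist in $w$, so the factor $w[i+1\pp i+p]$ cannot contain enough copies of $a$, and returning \false\ is correct; this corresponds to the contrapositive of the $2\Rightarrow 1$ direction.

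For the \textbf{complexity}, the argument is a direct count. The initial computation of the two Parikh vectors $\mathcal{H}=\PVW(1,h)$ and $\mathcal{P}=\PVW(h+1,p)$ takes $O(\sigma)$ space and can be done in $O(h+p)=O(n)$ time, or in $O(\sigma)$ time using the prefix-table precomputation mentioned in the proof of Theorem~\ref{theor:bf}; in either reading the dominant term is the loop. The \textbf{while} loop advances $i$ by $p$ each iteration, starting from $i=h+p<2p$ and stopping once $i+p>n$, so it executes $O(n/p)$ times. Each iteration performs $\sigma$ iterations of the inner \textbf{for} loop, and, since the arrays $\C$ and $\S$ have been precomputed, each $\selW$ query is answered in constant time (as noted immediately after Example~\ref{example3}). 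The remaining per-iteration work---the floor, the arithmetic, and the comparison---is $O(1)$. Hence the total time is $O(n/p\times\sigma)$ as claimed. The space is $O(\sigma)$ for storing $\mathcal{H}$ and $\mathcal{P}$, together with $O(1)$ auxiliary variables, since the input arrays $\C$, $\S$ and the word $w$ are given and not counted. I do not expect a serious obstacle here: the genuine content is entirely in Lemma~\ref{lemma-select}, and the proposition's proof is a matter of packaging that lemma into a loop invariant and reading off the iteration count.
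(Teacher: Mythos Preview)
Your proof is correct and takes essentially the same approach as the paper: correctness by appeal to Lemma~\ref{lemma-select} (you make the induction and loop invariant explicit, whereas the paper simply says ``the correctness comes directly from Lemma~\ref{lemma-select}''), and complexity by counting $O(n/p)$ iterations of the \textbf{while} loop times $\sigma$ for the inner \textbf{for} loop, with $O(\sigma)$ space for the two Parikh vectors. Your version is considerably more detailed than the paper's three-sentence proof, and you rightly flag the precondition $\PVW(1,h)\subset\PVW(h+1,p)$ needed for the base case, which the paper leaves implicit (it is enforced by the calling context via $\MW$).
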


\begin{proof}
The correctness comes directly from Lemma~\ref{lemma-select}.
The {\bf while} loop in line~\ref{boucle-w-i} is executed $n/p$ times and the
 {\bf for} loop in line~\ref{boucle-f-a} is executed $\sigma$ times,
 thus the time complexity is $O(n/p\times\sigma)$. The algorithm only needs to access two Parikh vectors, so the space used is $O(\sigma)$.
\end{proof}

Using Propositions~\ref{prop-minimum} and~\ref{prop-shift},
 algorithm \textsc{AbelianPeriod-Select}, given in Figure~\ref{algo-select}, 
 computes all the Abelian periods of a word $w$ of length $n$.
 
\begin{figure}[h]
\begin{center}
\input{algo_offline_select.tex}
\caption{\label{algo-select}Algorithm computing all the Abelian periods of
 word $w$ of length $n$, based on the $\sel$ function.}
\end{center}
\end{figure}

\begin{theorem}
Algorithm \textsc{AbelianPeriod-Select} computes all the Abelian periods of a
 word of length $n$ in 
 time $O(n^2\times \sigma)$ and space $O(n \times \sigma)$.
\end{theorem}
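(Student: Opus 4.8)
The plan is to prove the three assertions---correctness, the $O(n^2\times\sigma)$ running time, and the $O(n\times\sigma)$ space---in turn, reusing the propositions already established for the algorithm's components. For correctness I would show that \textsc{AbelianPeriod-Select} tests precisely those pairs $(h,p)$ with $h<p$ and $h+p\leqslant n$ that can be Abelian periods, and that for each tested pair the two explicit checks it performs are together equivalent to the conditions of Definition~\ref{def-ap}. First, the outer loop runs over $0\leqslant h\leqslant\lfloor(n-1)/2\rfloor$ and stops as soon as $\MW[h]=-1$; by Proposition~\ref{prop-minP} neither this $h$ nor any larger one can be the head of an Abelian period, so the early termination discards nothing. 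For each admissible $h$ the inner loop starts $p$ at $\max\{\MW[h],\lfloor(\GW[h]+1)/2\rfloor\}$, and Proposition~\ref{prop-minimum} guarantees that no smaller $p$ yields an Abelian period, so the threshold skipping is also safe.

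For a tested pair I would verify that the call \textsc{AbelianPeriod-Shift}$(h,p,\dots)$ together with the subsequent test $\PVW(n-t+1,t)\subset\PVW(h+1,p)$, where $t=(n-h)\bmod p$, is equivalent to $(h,p)$ being an Abelian period: by Proposition~\ref{prop-shift} (hence Lemma~\ref{lemma-select}) the shift call certifies that every inner block shares the Parikh vector $\PVW(h+1,p)$, while the last test certifies the tail condition, with the empty-tail case $t=0$ handled correctly since the zero vector is strictly contained in $\PVW(h+1,p)$. The delicate point---and the step I expect to be the main obstacle---is the head condition $\PVW(1,h)\subset\PVW(h+1,p)$, which the shift routine never re-verifies. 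I would argue that because every tested $p$ satisfies $p\geqslant\MW[h]\geqslant h+1$ and $\PVW(h+1,p)$ is componentwise nondecreasing in $p$, the very definition of $\MW[h]$ forces $\PVW(1,h)\subset\PVW(h+1,p)$ (the norm inequality $h<p$ holding throughout the loop). This is exactly the base-case hypothesis of Lemma~\ref{lemma-select}, so the shift verdict does capture the full body condition; confirming this silent enforcement of the head condition by the starting value of $p$, rather than by the shift routine itself, is the crux of the correctness argument.

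For the running time I would sum the per-pair cost. By Proposition~\ref{prop-shift} each shift call costs $O((n/p)\times\sigma)$ and each tail test $O(\sigma)$; bounding the range of $p$ below by $h+1$ (which only enlarges the sum) yields a total of $O\big(\sigma\sum_{h=1}^{n}\sum_{p=h}^{n}n/p\big)$, the same harmonic double sum evaluated to $O(n^2)$ in the proof of Theorem~\ref{theor:bf}, hence $O(n^2\times\sigma)$; the preprocessing by \textsc{ComputeSelect}, \textsc{ComputeM} and \textsc{ComputeG} adds only $O(n+\sigma)$. For the space I would note that $\C$, $\S$, $\MW$, $\GW$ occupy $O(n+\sigma)$ and each shift uses $O(\sigma)$ working space, while the factor Parikh vectors $\PVW(1,h)$, $\PVW(h+1,p)$ and $\PVW(n-t+1,t)$ are obtained in $O(\sigma)$ time from a precomputed table of prefix Parikh vectors of size $O(n\times\sigma)$; this table dominates, giving the claimed $O(n\times\sigma)$ space. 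The complexity bounds are thus a routine repetition of the Brute-Force estimate, and the real work lies in the correctness bookkeeping described above.
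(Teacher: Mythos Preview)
Your proposal is correct and follows the same overall approach as the paper: correctness via Propositions~\ref{prop-minP}, \ref{prop-minimum} and \ref{prop-shift}, the running time via the harmonic double sum $\sum_{h}\sum_{p\geqslant h} n/p$ exactly as in Theorem~\ref{theor:bf}, and the space dominated by the table of prefix Parikh vectors. The paper's own proof is in fact terser than yours---it simply cites Propositions~\ref{prop-minimum} and \ref{prop-shift} for correctness and does not spell out the head-condition issue; your observation that the head inclusion $\PVW(1,h)\subset\PVW(h+1,p)$ is never rechecked inside \textsc{AbelianPeriod-Shift} and is instead guaranteed by the starting value $p\geqslant\MW[h]$ (together with the monotonicity of $\PVW(h+1,p)$ in $p$) is a genuine gap-filling detail that the paper leaves implicit.
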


\begin{proof}
The correctness of the algorithm comes from Propositions~\ref{prop-minimum} and~\ref{prop-shift}. According to Proposition~\ref{prop-minimum}, the value of $p$ computed in
 line~\ref{ligne-max} is the 
 minimal value such that $(h,p)$ can be an Abelian period of the word.
The $\sel$ function is computed through the arrays $\MW$ and $\GW$, which can both be computed in a preprocessing phase in
 $O(n+\sigma)$ time and space.
Always during the preprocessing phase, all the prefixes of the word are  computed and stored in a table. This takes time $O(n)$ and space $O(n\times \sigma)$. In this way, the computation of the Parikh vector of a factor of the word can be done by computing the difference between two Parikh vectors in the table. Therefore, comparing the Parikh vectors of two factors takes $O(\sigma)$ time and space.

The test whether a pair $(h,p)$ is an Abelian period of the word is done by calling the function \textsc{AbelianPeriod-Shift} in line 8 and, if this returns {\sc true}, by verifying the compatibility of the tail in line 10 (output periods are not stored). By Proposition \ref{prop:APS}, each call to the function \textsc{AbelianPeriod-Shift} takes $O(n/p\times \sigma)$ time, while the test on the tail is performed in $O(\sigma)$ time. Thus, the overall time complexity of the algorithm \textsc{AbelianPeriod-Select} is $O(\sum_{h=0}^{\lfloor (n-1)/2 \rfloor} \sum_{p=h+1}^{n-h} (n/p \times \sigma)) =O(\sum_{h=1}^n \sum_{p=h}^n (n/p \times \sigma)) = O(n^{2}\times \sigma)$.

The space needed by the preprocessing phase is $O(n\times \sigma)$, needed for the computation of the table of the Parikh vectors of the prefixes of $w$, whereas the \textbf{while} loop in lines 5--13 only requires $O(\sigma)$ additional space.
 \end{proof}


\section{\label{sec-on}On-line algorithms}

We now propose three on-line algorithms to compute all the Abelian
 periods of a word $w$ using dynamic programming. The idea is to find combinatorial constraints to determine which Abelian periods of the prefix $w[1\pp i-1]$ are still Abelian periods of the prefix $w[1\pp i]$. Moreover, one has to store efficiently the Abelian periods of the prefixes of $w$. The three algorithms we describe below use for this purpose three different data structure: a two-dimensional table, lists and heaps, respectively.

The following proposition states that if $(h,p)$ is not an Abelian period
 of a word $w$, with $h+p\leqslant n=\vert w\vert$, then it cannot be an
 Abelian period of any word having $w$ as a prefix.
 
\begin{proposition}
\label{prop-dp}
Let $w$ be a word of length $n$ and let $h,p$ such that $h+p\leqslant n$.
If $(h,p)$ is not an Abelian period of $w$,
 then $(h,p)$ is not an Abelian period of $wa$ 
 for any letter $a\in\Sigma$.
\end{proposition}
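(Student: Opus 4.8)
The plan is to argue by contraposition: I will show that if $(h,p)$ is an Abelian period of $wa$, then it is already an Abelian period of $w$. This is the natural direction, since a decomposition witnessing the period of the longer word $wa$ ought to restrict to one for its prefix $w$ by simply discarding the appended letter.

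First I would fix a decomposition $wa = u_0 u_1 \cdots u_{k-1} u_k$ witnessing that $(h,p)$ is an Abelian period of $wa$, so that $\PV_{u_{0}} \subset \PV_{u_{1}} = \cdots = \PV_{u_{k-1}} \supset \PV_{u_{k}}$ with $|\PV_{u_{0}}| = h$ and common block Parikh vector $\PV := \PV_{u_{1}}$ of norm $p$. The key observation is that the appended letter $a$ is the last letter of $wa$, hence it lies entirely inside the final non-empty block of the decomposition; deleting it leaves the head $u_0$ and all earlier blocks untouched and merely truncates that last block by one letter.

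I would then split into two cases according to whether the tail $u_k$ of $wa$ is empty. If $u_k$ is non-empty, deleting $a$ replaces $u_k$ by its length-$(|u_k|-1)$ prefix $u_k'$; since $\PV_{u_k'}$ is componentwise $\leqslant \PV_{u_{k}} \subset \PV$ and has norm strictly less than $p$, the truncated tail still satisfies $\PV_{u_k'} \subset \PV$, so $w = u_0 \cdots u_{k-1} u_k'$ is a valid witness. If instead $u_k$ is empty (so $wa = u_0 \cdots u_{k-1}$ ends with a full block), deleting $a$ turns the last full block $u_{k-1}$ into a tail $u_{k-1}'$ of length $p-1$; again $\PV_{u_{k-1}'}$ is componentwise $\leqslant \PV$ with norm $p-1 < p$, hence $\PV_{u_{k-1}'} \subset \PV$, and $w = u_0 \cdots u_{k-2} u_{k-1}'$ is a valid witness. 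In either case the verification of the relation $\subset$ (componentwise domination together with the strict norm inequality, which is automatic since $h<p$ and the relevant tail length is $(n-h)\bmod p < p$) is routine.

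The one point that needs care --- and the main, though minor, obstacle --- is checking that at least one complete interior block survives the truncation, so that the resulting decomposition of $w$ still legitimately defines the period $\PV$ of norm $p$ rather than degenerating. This is exactly where the hypothesis $h+p\leqslant n$ enters: it guarantees $\lfloor (n-h)/p\rfloor \geqslant 1$, i.e.\ that $w$ itself has room for a full block, and a short count shows this matches the number of full blocks surviving in each of the two cases above. Assembling these pieces yields a decomposition of $w$ of the required form, establishing the contrapositive and hence the proposition.
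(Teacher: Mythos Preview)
Your proof is correct, but it proceeds differently from the paper. The paper argues directly: it lists the three ways $(h,p)$ can fail to be an Abelian period of $w$ --- the head not contained in the first block, two interior blocks with distinct Parikh vectors, or the tail not contained in the preceding block --- and checks that each failure persists in $wa$. You instead argue the contrapositive, taking a decomposition of $wa$ and truncating it to one for $w$.

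Your route is arguably cleaner: it handles everything through one uniform ``delete the last letter'' operation with two easy sub-cases (tail non-empty vs.\ empty), and it makes transparent the real content of the proposition, namely that the Abelian-period property is inherited by prefixes of length at least $h+p$. The paper's case split, by contrast, tracks exactly \emph{which} constraint is violated, which is slightly more informative but also more bookkeeping. Your identification of where the hypothesis $h+p\leqslant n$ is needed --- to guarantee at least one surviving full block after truncation, via $\lfloor (n-h)/p\rfloor\geqslant 1$ --- is exactly right; in the empty-tail case this combines with $p\mid(n+1-h)$ to force $n+1-h\geqslant 2p$, so at least one interior block remains.
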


\begin{proof}
If $(h,p)$ is not an Abelian period of $w$, at least
 one of the following three cases holds:
\begin{enumerate}
 \item
 $\PVW(1,h) \not\subset \PVW(h+1,p)$;
 \item there exist two distinct indices $h\leqslant i,i' \leqslant |w|-p+1$
 such
 that $i=kp+h+1$ and $i'=k'p+h+1$ with $k$ and $k'$ two integers
 and $\PVW(i,p) \ne \PVW(i',p)$;
 \item $t = (|w|-h) \bmod p$ and
 $\PVW(|w|-t+1,t) \not\subset \PVW(|w|-p-t+1,p)$.
\end{enumerate}
If case 1 holds, then $\PVWA(1,h) \not\subset \PVWA(h+1,p)$;
if case 2 holds, then $\PVWA(i,p) \ne \PVWA(i',p)$; finally, if case 3 holds, then $\PVWA(|w|-t+1,t+1) \nsubseteq \PVWA(|w|-p-t+1,p)$. In all cases, $(h,p)$ is not an Abelian period of $wa$.
 \end{proof}


\subsection{Two-dimensional array}

The first algorithm (given in \figurename~\ref{algo-on-line-array}) 
  uses a two-dimensional array $T$ and the property stated in Proposition~\ref{prop-dp} to compute all the Abelian periods
 of an input word $w$ on-line.
The algorithm processes the positions of $w$ in increasing order from left to right.
When processing position $i$,
 the value of $T[h,p]$ is set to $j$ if and only if $w[1\pp j]$ is the longest prefix of $w[1\pp i]$ having
 Abelian period $(h,p)$.
Hence, if $j=i-1$, the algorithm checks whether $w[1\pp i]$ has
 Abelian period $(h,p)$, and updates $T[h,p]$
 accordingly.
 
 \begin{figure}[h]
\begin{center}
\begin{algo}[eqtab]{AbelianPeriod-array}{w,n}
\SET{T[0,1]}{1}
\label{algo2-line1}
\DOFORI{i}{2}{n}
  \DOFORI{p}{1}{i-1}
\label{algo2-line2}
    \DOFORI{h}{0}{\mbox{min}\{p-1,i-p-1\}}
	\IF{T[h,p]=i-1}
	  \SET{d}{(i-h)\mod\ell}
	  \IF{d\ne 0}
		\IF{\PVW(i-d+1,d)\subset\PVW(i-d-p+1,p)}
			\SET{T[h,p]}{i}
		\FI
	  \ELSE
		\IF{\PVW(i-p+1,p)=\PVW(i-2\times p+1,p)}
			\SET{T[h,p]}{i}
		\FI
	  \FI
	\FI
    \OD
  \OD
  \DOFORI{h}{0}{\lfloor i/2\rfloor-1}
    \IF{\PVW(1,h)\subset\PVW(h+1,i-h)}
      \SET{T[h,i-h]}{i}
    \ELSE
      \SET{T[h,i-h]}{-1}
    \FI
  \OD
\OD
\RETURN T
\end{algo}
\caption{
\label{algo-on-line-array}
On-line dynamic programming algorithm for computing all the Abelian periods
 of a word $w$ of
 length $n$ using a two-dimensional array.
}
\end{center}
\end{figure}

Since $T[h,p]=i$ if and only if $w[1\pp i]$ is the longest prefix
 of $w$ having Abelian period $(h,p)$, one has that $(h,p)$ is an Abelian period of $w$ if and only if $T[h,p]=n$.

\begin{example}
\label{example2}
For $w=\sa{abaababa}$, the algorithm computes the following array $T$:

\begin{center}
\begin{tabular}{|c|cccccccc|}
\hline
$h\backslash p$ & $1$ & $2$ & $3$ & $4$ & $5$ & $6$ & $7$ & $8$\\
\hline
$0$ & $1$ & $3$ & $8$ & $6$ & $8$ & $8$ & $8$ & $8$\\
$1$ &   & $8$ & $6$ & $8$ & $8$ & $8$ & $8$ &\\
$2$ &   &   & $8$ & $8$ & $8$ & $8$ &   &\\
$3$ &   &   &   & $8$ & $8$ &   &   &\\
\hline
\end{tabular}
\end{center}
Cells in which $T[h,p]=|w|=8$ correspond to pairs $(h,p)$ output by algorithm
 \textsc{AbelianPeriod-BruteForce} of Example~\ref{example1}.
Empty cells on the left part of the array correspond to cases
 in which $h \geqslant p$, while empty cells on the right part
 correspond to cases in which $h+p > |w|$.
\end{example}

\begin{theorem}
The algorithm \textsc{AbelianPeriod-array} computes all the Abelian
 periods of a given word of length $n$ in time
 $\Theta(n^3\times \sigma)$ and space $\Theta(n^2)$.
\end{theorem}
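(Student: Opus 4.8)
The plan is to prove correctness by a loop invariant and then to read off both complexity bounds from the (input-independent) loop ranges. For correctness I would carry the invariant announced in the text just before the statement: after the outer iteration for position $i$ finishes, $T[h,p]$ equals the length of the longest prefix of $w[1\pp i]$ that has Abelian period $(h,p)$, with the sentinel value $-1$ meaning that the head $w[1\pp h]$ already fails to be contained in the first length-$p$ block and so $(h,p)$ can never be a period of any prefix. The base of the induction is the initialization $T[0,1]=1$ together with the second inner loop, which at step $i$ treats exactly the pairs with $h+p=i$, i.e.\ the shortest prefix on which $(h,p)$ could conceivably be a period; here the tail is empty and the test $\PVW(1,h)\subset\PVW(h+1,i-h)$ is precisely the head-containment condition of Definition~\ref{def-ap}, so setting $T[h,i-h]$ to $i$ or to $-1$ establishes the invariant for these pairs.

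For the inductive step I would show that the first inner loop extends every still-living period from length $i-1$ to length $i$. The engine here is Proposition~\ref{prop-dp}: once $(h,p)$ fails on some prefix it fails on every longer prefix, so the prefix lengths carrying period $(h,p)$ form a contiguous (possibly empty) range $h+p,\ldots,J$ and ``longest prefix with this period'' is well defined; hence a pair is a candidate for extension at step $i$ exactly when $T[h,p]=i-1$, which is the guard the algorithm tests. I then split on $d=(i-h)\bmod p$. When $d=0$ the prefix ends at a block boundary, so extension requires the freshly completed block $w[i-p+1\pp i]$ to carry the same Parikh vector as the preceding block, which is the test $\PVW(i-p+1,p)=\PVW(i-2\times p+1,p)$; when $d\neq 0$ the prefix ends with a tail of length $d<p$, so extension requires this tail to be contained in the preceding full block, namely $\PVW(i-d+1,d)\subset\PVW(i-d-p+1,p)$. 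A successful test advances $T[h,p]$ to $i$; a failed one leaves it at $i-1$, after which the guard is never met again and the entry is frozen, exactly as monotonicity demands. Instantiating the invariant at $i=n$ yields that $(h,p)$ is an Abelian period of $w$ iff $T[h,p]=n$.

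For the running time I would first note the $O(n\times\sigma)$ preprocessing that tabulates the Parikh vectors of all prefixes, so that every factor comparison costs $O(\sigma)$. The outer loop runs $O(n)$ times; for fixed $i$ the first inner double loop visits $\sum_{p=1}^{i-1}\min(p,i-p)=\Theta(i^2)$ pairs and the second visits $\Theta(i)$ pairs. Charging $O(\sigma)$ to each iteration gives $\sum_{i=2}^{n}\Theta(i^2)\cdot O(\sigma)=O(n^3\times\sigma)$, the stated upper bound. The matching lower bound is the only nonroutine point: the loop ranges do not depend on $w$, so the double loop is traversed $\Theta(n^3)$ times on every input, and to force the extra $\sigma$ factor I would exhibit an input on which a constant fraction of these iterations actually perform a Parikh comparison rather than just the guard test. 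The word $w=a_1^{\,n}$ over a $\sigma$-letter alphabet does this: every pair with $0\leqslant h<p$ and $h+p\leqslant n$ is a period of all its long enough prefixes, so $\Theta(n^2)$ pairs stay alive for $\Theta(n)$ positions each, triggering $\Theta(n^3)$ comparisons of $\sigma$-dimensional vectors and hence $\Omega(n^3\times\sigma)$ total work.

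For the space bound the dominant object is the table $T$, whose admissible cells $(h,p)$ number $\Theta(n^2)$; the preprocessed prefix Parikh vectors use $O(n\times\sigma)$, which is $O(n^2)$ because $\sigma\leqslant n$ for a length-$n$ word, so the total is $\Theta(n^2)$. The step I expect to be the real obstacle is this tightness of the time bound: the $O(n^3\times\sigma)$ upper bound and the $\Omega(n^3)$ iteration count are immediate, but pinning down $\Omega(n^3\times\sigma)$ forces me to verify that the chosen word keeps quadratically many pairs simultaneously alive for a linear number of steps, so that the guarded $\Theta(\sigma)$-comparisons, and not merely the $O(1)$ guard tests, are executed $\Theta(n^3)$ times.
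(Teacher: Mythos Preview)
Your proposal is correct and follows the same approach as the paper: correctness via Proposition~\ref{prop-dp} and the loop invariant described just before the theorem, time complexity from the three nested \textbf{for} loops, and space from the array~$T$. Your treatment is considerably more thorough than the paper's three-sentence sketch; in particular you supply an explicit worst-case instance ($w=a_1^{\,n}$) to justify the $\Omega(n^3\times\sigma)$ lower bound, a point the paper simply asserts.
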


\begin{proof}
The correctness of the algorithm comes from Proposition~\ref{prop-dp}.
The time complexity of the algorithm
 is  due to the three \textbf{for}
 loops of lines~\ref{algo2-line1} to~\ref{algo2-line2}.
The space complexity is due to the array $T$.
\end{proof}


\subsection{Lists}

Also the algorithm \textsc{AbelianPeriod-list}, given in \figurename~\ref{algo-online-list}, processes
 the position of $w$ in increasing order from left to right.
When processing position $i$, it only stores
 the pairs $(h,\ell)$ such that $w[1\pp i-1]$ has Abelian period $\ell$
 with head $h$.
 
 \begin{figure}[h]
\begin{center}
\begin{algo}{AbelianPeriod-list}{w,n}
\SET{L}{\{(0,1)\}}
\label{algo3-line1}
\DOFORI{i}{2}{n}
  \SET{L'}{\emptyset}
\label{algo3-line2}
  \DOFOR{\mbox{all~}(h,\ell)\in L}
    \SET{d}{(i-h)\mod\ell}
    \IF{d\ne 0}
	\IF{\PVW(i-d+1,d)\subset\PVW(i-d-\ell+1,\ell)}
	  \SET{L'}{L'\bigcup\{(h,\ell)\}}
	\FI
    \ELSE
	\IF{\PVW(i-\ell+1,\ell)=\PVW(i-2\times\ell+1,\ell)}
	  \SET{L'}{L'\bigcup\{(h,\ell)\}}
	\FI
    \FI
  \OD
  \SET{L'}{L'\bigcup\{(0,i)\}}
  \DOFORI{h}{1}{\lfloor i/2\rfloor-1}
    \IF{\PVW(1,h)\subset\PVW(h+1,i-h)}
      \SET{L'}{L'\bigcup\{(h,i-h)\}}
    \FI
  \OD
  \SET{L}{L'}
\OD
\end{algo}
\caption{
\label{algo-online-list}
On-line dynamic programming algorithm for computing all the Abelian periods
 of a word $w$ of
 length $n$ using lists.
}
\end{center}
\end{figure}

\begin{theorem}
The algorithm \textsc{AbelianPeriod-list} computes all the Abelian
 periods of a given word of length $n$ in time
 $O(n^3\times \sigma)$ and space $O(n^2)$.
\end{theorem}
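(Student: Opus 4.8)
The plan is to mirror the proof for \textsc{AbelianPeriod-array}, proving correctness through a loop invariant and then charging the running time and space against the size of the list. The invariant I would maintain is that, at the beginning of the iteration for a given $i$, the list $L$ is exactly the set of pairs $(h,\ell)$ for which $w[1\pp i-1]$ has Abelian period $(h,\ell)$. The base case $L=\{(0,1)\}$ records the unique period of $w[1]$, and once the invariant is established for $i=n$ the algorithm is correct, since $(h,p)$ is an Abelian period of $w$ precisely when it remains in $L$ after the last iteration.

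For the inductive step I would verify that the loop body transforms the periods of $w[1\pp i-1]$ into those of $w[1\pp i]$. The survival test for a pair $(h,\ell)\in L$ splits on $d=(i-h)\bmod\ell$: when $d\ne 0$ the new letter lengthens an incomplete tail, so $\PVW(i-d+1,d)\subset\PVW(i-d-\ell+1,\ell)$ is exactly the tail condition; when $d=0$ the new letter completes a block, so $\PVW(i-\ell+1,\ell)=\PVW(i-2\ell+1,\ell)$ is exactly the block condition. By the invariant the head containment and the equality of all earlier blocks were already checked in previous iterations, so Proposition~\ref{prop-dp} guarantees that testing only the one condition affected by $w[i]$ suffices, and a discarded pair never needs to be reconsidered. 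The pairs that first become periods at length $i$ are exactly those with $h+\ell=i$; these are inserted by $L'\leftarrow L'\cup\{(0,i)\}$ and by the loop over $1\le h\le\lfloor i/2\rfloor-1$, whose test $\PVW(1,h)\subset\PVW(h+1,i-h)$ is the head condition checked at creation.

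For the complexity I would precompute the Parikh vectors of all prefixes of $w$ in $O(n\times\sigma)$ time and space, so that each containment or equality test is answered in $O(\sigma)$ time by differencing two prefix vectors. By Lemma~\ref{lemma-max} the prefix $w[1\pp i-1]$ has $O(n^2)$ Abelian periods, hence $|L|=O(n^2)$ throughout; scanning $L$ costs $O(n^2\times\sigma)$ per position, and the new-period loop costs $O(n\times\sigma)$ per position and is dominated. Summing over the $n-1$ values of $i$ yields $O(n^3\times\sigma)$ time. For space, $L$ and $L'$ each store $O(n^2)$ pairs of constant size and the prefix table uses $O(n\times\sigma)=O(n^2)$ space (as $\sigma\le n$), giving $O(n^2)$ overall.

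The only genuinely delicate point is the correctness of the single per-step test: justifying that it is enough to examine the condition touched by the newly read letter $w[i]$, without rechecking the head or any earlier block. This is precisely where Proposition~\ref{prop-dp} supplies the monotonicity needed, so that a pair absent from $L$ can never revive and a pair present in $L$ has had all its earlier constraints validated already. The remaining arguments are the same bookkeeping as for the array version, the list merely storing the live periods in place of the full $n\times n$ table.
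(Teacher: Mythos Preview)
Your proposal is correct and follows essentially the same approach as the paper: correctness via Proposition~\ref{prop-dp}, the $O(n^2)$ bound on $|L|$ via Lemma~\ref{lemma-max}, and the time bound from iterating over $L$ for each position. The paper's own proof is a three-line sketch citing exactly these ingredients; your version simply spells out the loop invariant and the case analysis in full, which is a welcome expansion rather than a different argument.
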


\begin{proof}
The correctness of the algorithm comes from Proposition~\ref{prop-dp}.
The space complexity for the list $L$ is given by Lemma~\ref{lemma-max}.
The time complexity of the algorithm
 is  due to the two \textbf{for}
 loops of lines~\ref{algo3-line1} and~\ref{algo3-line2}
 and the maximal number of elements in the list $L$.
\end{proof}


\subsection{Heaps}

The following proposition shows that the set of Abelian periods
 of a prefix of a word can be 
 partitioned into subsets depending of the length of the tail.
In some cases, all the periods of a subset can  be processed
 at once by inspecting only the smallest period of the subset.

\begin{proposition}
\label{prop-heap} 
Let $w$ have $s$ Abelian periods $(h_1,p_1) < (h_2,p_2) < \cdots <(h_s,p_s)$
 such that $(|w|-h_i) \bmod p_i=t>0$ for $1\leqslant i \leqslant s$.
For any letter $a\in\Sigma$, if $(h_1,p_1)$ is an Abelian period of $wa$, then
 $(h_2,p_2),\ldots,(h_s,p_s)$ are also Abelian periods of $wa$.
\end{proposition}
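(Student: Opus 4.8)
The plan is to reduce the whole statement to a single monotonicity fact about letter counts. First I would exploit the common tail length: since every period $(h_i,p_i)$ has tail length $t$, the tail of $w$ is, for all of them, the same factor $w[|w|-t+1\pp|w|]$, of Parikh vector $\mathcal{T}=\PVW(|w|-t+1,t)$; in particular the number $\mathcal{T}[\ind(a)]$ of occurrences of the appended letter $a$ in the tail does not depend on $i$. I would write $\mathcal{P}_i=\PVW(h_i+1,p_i)$ for the Parikh vector of a full block of $(h_i,p_i)$ and note that $\mathcal{T}\subset\mathcal{P}_i$ for every $i$, because the tail of an Abelian period is contained in a block (Definition~\ref{def-ap}); hence $\mathcal{T}[\ind(b)]\leqslant\mathcal{P}_i[\ind(b)]$ for every letter $b$.

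Next I would establish a uniform extension criterion: $(h_i,p_i)$ is an Abelian period of $wa$ if and only if $\mathcal{T}[\ind(a)]<\mathcal{P}_i[\ind(a)]$. Appending $a$ leaves the head and all full blocks of $(h_i,p_i)$ untouched and only turns the tail Parikh vector $\mathcal{T}$ into $\mathcal{T}+e_a$ of norm $t+1\leqslant p_i$, where $e_a$ has a $1$ in coordinate $\ind(a)$ and $0$ elsewhere. If $t+1<p_i$, the suffix remains a tail and the requirement $\mathcal{T}+e_a\subset\mathcal{P}_i$ reduces, since $\mathcal{T}\subset\mathcal{P}_i$ already settles every other coordinate, to $\mathcal{T}[\ind(a)]+1\leqslant\mathcal{P}_i[\ind(a)]$. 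If $t+1=p_i$, the suffix becomes a genuine full block, so the requirement is $\mathcal{T}+e_a=\mathcal{P}_i$; but componentwise $\leqslant$ together with the now equal norms $t+1=p_i$ forces that equality, so the condition is again exactly $\mathcal{T}[\ind(a)]+1\leqslant\mathcal{P}_i[\ind(a)]$. Either way the criterion is $\mathcal{T}[\ind(a)]<\mathcal{P}_i[\ind(a)]$.

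The core of the argument is then the monotonicity of $\mathcal{P}_i[\ind(a)]$ in $p_i$. By Definition~\ref{def-ap} the block $u_1$ is a factor of $w$, so $h_i+p_i\leqslant|w|$ and $w$ contains full blocks of length $p_i$, the last of which is $w[|w|-t-p_i+1\pp|w|-t]$; its right endpoint $|w|-t$ is common to all the periods. The ordering $(h_1,p_1)<\cdots<(h_s,p_s)$ forces $p_1\leqslant\cdots\leqslant p_s$, so for every $i$ the last full block of $(h_1,p_1)$ is a suffix of the last full block of $(h_i,p_i)$; consequently $\mathcal{P}_i[\ind(a)]\geqslant\mathcal{P}_1[\ind(a)]$, since a longer factor cannot contain fewer occurrences of $a$. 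Combining the three steps, if $(h_1,p_1)$ is an Abelian period of $wa$ then $\mathcal{T}[\ind(a)]<\mathcal{P}_1[\ind(a)]\leqslant\mathcal{P}_i[\ind(a)]$ for every $i$, and therefore each $(h_i,p_i)$ is an Abelian period of $wa$.

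I expect the only delicate point to be the boundary case $t+1=p_i$ of the extension criterion, where appending $a$ completes the previously incomplete block. There one must not write $\mathcal{T}+e_a\subset\mathcal{P}_i$, which is false because $\subset$ requires a strict norm inequality; instead one observes that the equal norms upgrade the componentwise inequalities to the equality $\mathcal{T}+e_a=\mathcal{P}_i$ that is actually required, so that the single inequality $\mathcal{T}[\ind(a)]<\mathcal{P}_i[\ind(a)]$ governs both cases at once. Once this uniform criterion is in place, the monotonicity step is immediate and the proposition follows.
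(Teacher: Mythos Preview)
Your proof is correct and follows essentially the same route as the paper's: identify the common tail $z=w[|w|-t+1\pp|w|]$, compare it (with the appended letter $a$) against the last full block $u_{i,k_i-1}=w[|w|-t-p_i+1\pp|w|-t]$, and use the fact that these last blocks are nested suffixes of one another because $p_1\leqslant\cdots\leqslant p_s$ and they share the right endpoint $|w|-t$. Your treatment is more explicit than the paper's---in particular, you spell out the boundary case $t+1=p_i$ where the appended letter completes a block, whereas the paper silently covers both cases by writing $\PV_{u_{1,k_1}a}\subseteq\PV_{u_{1,k_1-1}}$---but the underlying argument is the same.
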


\begin{proof}
Since $(h_1,p_1) < (h_2,p_2) < \cdots <(h_s,p_s)$ are Abelian periods of $w$, we can write
 $w=u_{i,0}u_{i,1}\cdots u_{i,k_i-1}u_{i,k_i}$ with $|u_{i,0}|=h_i$,
 $|u_{i,j}|=p_i$ and $|u_{i,k_i}|=t$ for $1\leqslant i \leqslant s$ and
 $1\leqslant j \leqslant k_i$.
If $(h_1,p_1)$ is an Abelian period of $wa$, then
 $\PV_{u_{1,k_1}a} \subseteq \PV_{u_{1,k_1-1}}$.
Since $|u_{1,k_1}|= |u_{i,k_i}|$ and $|u_{1,k_1-1}|\leqslant |u_{i,k_i-1}|$,
 we have
 that $\PV_{u_{i,k_i}a} \subseteq \PV_{u_{i,k_i-1}}$ for
 $2\leqslant i\leqslant s$.
Hence, $(h_2,p_2),\ldots,(h_s,p_s)$ are Abelian periods of $wa$
 (see~\figurename~\ref{figu-prop-heap}).
 \end{proof}
 
 \begin{figure}[h]
\begin{center}
\begin{tikzpicture}
\draw (0,0) rectangle (5,.5);
\draw (0,1) rectangle (5,1.5);
\draw (0,1.75) rectangle (5,2.25);
\draw (4.5,0) -- (4.5,.5);
\draw (4.5,1) -- (4.5,1.5);
\draw (4.5,1.75) -- (4.5,2.25);
\node at (4.75,.25) {$a$};
\node at (4.75,1.25) {$a$};
\node at (4.75,2) {$a$};
\draw (3.5,0) -- (3.5,.5);
\draw (3.5,1) -- (3.5,1.5);
\draw (3.5,1.75) -- (3.5,2.25);
\node at (4,.25) {$z$};
\node at (4,1.25) {$z$};
\node at (4,2) {$z$};
\draw (1.8,0) -- (1.8,.5);
\node at (2.65,.25) {$u_{s,k_s-1}$};
\draw (2,1) -- (2,1.5);
\node at (2.75,1.25) {$u_{2,k_2-1}$};
\draw (2.2,1.75) -- (2.2,2.25);
\node at (2.85,2) {$u_{1,k_1-1}$};
\node at (2.5,.8){\vdots};
\end{tikzpicture} 
\end{center}
\caption{
\label{figu-prop-heap}
$w=u_{i,0}u_{i,1}\cdots u_{i,k_i-1} u_{i,k_i}$, $u_{i,k_i}=z$ for
 $1\leqslant i \leqslant s$.
If $\PV_{za} \subseteq \PV_{u_{1,k_1-1}}$, then
 $\PV_{za} \subseteq \PV_{u_{i,k_i-1}}$ for every $2\leqslant i\leqslant s$.
}
\end{figure}	

The algorithm \textsc{AbelianPeriod-heap}, given in \figurename~\ref{algo-on-line-heap}, uses the property stated in
 Proposition~\ref{prop-heap} for computing all the Abelian
 periods of an input word $w$ by gathering all the ongoing periods $(h,p)$ with the
 same tail length in a heap where the element in the root
  is the smallest period.
 
When processing $w[i]$, the algorithm processes every heap $H$
 for the different tail lengths:
\begin{itemize}
\item
 if the period $(h,p)$ at the root of $H$ is a period of $w[1\pp i]$,
 then by Proposition~\ref{prop-heap} all the elements of $H$
 are Abelian periods of $w[1\pp i]$.
If the tail length becomes equal to $p$, then $(h,p)$ is removed from
 the current heap and is put into a new heap corresponding to the
 empty tail.
\item
if the period $(h,p)$ at the root of $H$ is not a period of $w[1\pp i]$,
 then it is removed from $H$ and the same process is applied until
 a pair $(h',p')$ is an Abelian period of $w[1\pp i]$ or the heap becomes
 empty. This procedure is realized by function \textsc{ExtractUntilOK} in 
line~\ref{algo-heap-extractuntilok}.
\end{itemize}

Finally, all the degenerate cases $(h,p)$ such that $h<p$ and $h+p=i$ have to
 be inserted in the heap corresponding to the empty tail
 (lines~\ref{algo-heap-line-empty-tail1} to~\ref{algo-heap-line-empty-tail2}).
 
The function \CALL{Root}{H} returns the smallest element of the heap $H$,
 the function
 \CALL{Insert}{H,e} inserts element $e$ in the heap $H$, while the function
 \CALL{Remove}{H} removes the smallest element of the heap $H$.

\begin{figure}[h]
\begin{center}
\input{algo_online_heap.tex}
\caption{
\label{algo-on-line-heap}
On-line algorithm for computing all the Abelian periods of a word $w$ of
 length $n$ using heaps.
}
\end{center}
\end{figure}
 
\begin{theorem}
The algorithm \textsc{AbelianPeriod-heap} computes all the Abelian
 periods of a given word of length $n$ in time
 $O(n^3\log n \times\sigma)$ and space $O(n^2)$.
\end{theorem}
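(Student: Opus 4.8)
The plan is to follow the template used for the array and list versions: establish correctness from the two structural propositions, then bound the running time and the space by counting heap operations. For correctness I would prove by induction on $i$ the invariant that, after iteration $i$, the list $L$ stores exactly the Abelian periods of $w[1\pp i]$, partitioned into heaps according to their current tail length, each heap keeping the smallest period at its root. The inductive step relies on Proposition~\ref{prop-heap} for the decisive saving: all periods sharing a heap have the same tail length but strictly increasing block size $p$, so once the root $(h,p)$ is known to survive the reading of $w[i]$, every other period of that heap survives as well, and only the root has to be tested. A root that fails is discarded through \textsc{ExtractUntilOK}, and Proposition~\ref{prop-dp} guarantees that a discarded period can never become a period of a longer prefix, so the extraction is permanent. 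A root whose block has just been completed (tail length reaching $p$) is moved into $\nt$, the heap collecting the empty-tail class, and the final loop over the values $h$ with $h+p=i$ inserts the freshly created degenerate periods into $\nt$ as well, rebuilding the empty-tail heap for the next iteration.

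For the space bound I would note that the heaps together hold only the currently alive periods, which number $O(n^2)$ by Lemma~\ref{lemma-max}, while the Parikh-containment tests are answered through a prefix table of size $O(n\times\sigma)=O(n^2)$; hence the total space is $O(n^2)$. For the time bound I would charge every cost to heap operations. Each period undergoes one \textsc{Insert} at its creation and then one \textsc{Remove} followed by one \textsc{Insert} each time its block is completed; since a period $(h,p)$ completes a block $O(n/p)=O(n)$ times and there are $O(n^2)$ periods, the total number of \textsc{Insert} and \textsc{Remove} operations is $O(n^3)$, a bound that also absorbs the extractions of \textsc{ExtractUntilOK} and the insertions of the degenerate periods. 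No heap ever holds more than $O(n^2)$ elements, so each operation costs $O(\log n)$, and each inspection of a root is accompanied by a single Parikh-vector comparison costing $O(\sigma)$. Multiplying the $O(n^3)$ operations by the $O(\log n)$ and the $O(\sigma)$ per-operation costs yields $O(n^3\log n\times\sigma)$.

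The complexity count is routine; the delicate part is the correctness invariant. The main obstacle is to check that a single iteration preserves the grouping invariant --- specifically that, after the completed root has been moved to $\nt$ and the failing roots have been extracted, the periods remaining in a heap still share one common tail length and the heap order still exposes the smallest of them at the root. This is precisely where Proposition~\ref{prop-heap} has to be applied to the residual heap, and formulating the invariant sharply enough to license that application is the crux of the proof.
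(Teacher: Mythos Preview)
Your argument is correct and reaches the stated bounds, but it is organised differently from the paper's proof.

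For correctness, the paper says almost nothing: it simply relies implicitly on Propositions~\ref{prop-dp} and~\ref{prop-heap} as you do, without spelling out the invariant you describe. Your more careful treatment (the partition into heaps by tail length, and the observation that only the root can complete a block because all other elements in the heap have strictly larger~$p$) is not in the paper but is exactly what one would need to make the correctness rigorous.

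For the time bound the two analyses diverge. The paper argues \emph{per iteration}: at step~$i$ the total number of nodes across all heaps is at most $n^{2}$, so the loop over $H\in L$ can trigger at most $n^{2}$ \textsc{Remove}/\textsc{Insert}/\textsc{ExtractUntilOK} operations, each costing $O(\log n)$, with one $O(\sigma)$ Parikh comparison per inspection; multiplying by the $n$ outer iterations gives $O(n^{3}\log n\times\sigma)$. You instead amortise over the life of each period: one insertion at creation, one removal/re-insertion per completed block (at most $n/p\le n$ of them), and at most one permanent extraction, summed over $O(n^{2})$ periods. Your accounting is finer (in fact $\sum_{p}\sum_{h<p} n/p = O(n^{2})$ block completions, not $O(n^{3})$), but since you only need to meet the paper's $O(n^{3}\log n\times\sigma)$ bound, the looser count you state suffices. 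The paper's per-iteration argument is shorter; your amortised argument would be the right starting point if one wanted to tighten the bound.
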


\begin{proof}
The space memory depends on the total number of nodes of the heaps.
Since one node corresponds exactly to one Abelian period, the maximum
number of nodes is then bounded by $n^2$.

For the same reason, during each execution of the \textbf{for} loop 
 starting in line 4, the maximum number of nodes removed or inserted by 
\textsc{ExtractUntilOK}, \textsc{Remove} and \textsc{Insert} functions
 is bounded by $n^2$. Each of these functions takes time at most $\log n$.
Comparing two Parikh vectors in line 7 takes time at most $\sigma$.
The time complexity of this loop is then $O(n^2\log n \times\sigma)$.

The \textsc{Insert} function in the \textbf{while} loop starting in line 13 
 is called at most $n$ times. The time complexity of this loop is then
 $O(n\log n)$.

Since these two loops are executed $n$ times (loop \textbf{for} starting in line 2) 
 the time complexity of this algorithm is 
 $O(n^3\log n\times \sigma)$.    
 \end{proof}


\section{\label{sec-exp}Experimental results}

Practical performances of the two off-line algorithms have been compared.
They both have been implemented in C in a homogeneous way using the table of the Parikh vectors of the prefixes of the word, and run on test sets of random words ($3\,000$ words each) of different lengths (from $10$ to $10\,000$) on different alphabet sizes ($2$, $3$, $4$, $8$ and $16$). 
Tests were performed on a MacBook Pro laptop running Mac OS X with a 2.2 GHz processor and 2~GB~RAM.
 
A first remark is that most of the Abelian periods of a word have only one occurrence of the factor of length $p$, i.e., are such that $h+2p \geqslant |w|$. We call these latter \emph{trivial Abelian periods}. To give an idea, the prefix of length $4\,181$ of the Fibonacci word  $F=\sa{abaababaabaab}\cdots$ has $3\,453\,511$ Abelian periods, but only $538\,739$ (i.e., about 15.6\%) are non-trivial. The same proportion holds for longer prefixes of the Fibonacci word. But the Fibonacci word is probably one of the words with the highest proportion of non-trivial Abelian periods. 
Note that the word $\sa{a}^{2\,090}\sa{b}\sa{a}^{2\,090}$ of the same length has $2\,914\,854$ Abelian periods, and all of them are trivial.

If one considers \emph{all} the Abelian periods (that is, both trivial and non-trivial) running times of the two algorithms are very close, and seem to depend on the machine architecture more than on the algorithm itself (results not shown).
If instead one computes non-trivial Abelian periods only, the  $\sel$-based algorithm significantly improves on the Brute-Force one, and the gap increases when the alphabet size increases. In fact, even if the worst-case complexity of the two algorithms depend on $\sigma$, the $\sel$-based algorithm seems to have an average behavior independent from the alphabet size.
In Figure~\ref{fig-time} we show results for alphabet sizes $2$ and $16$. 
These tests also suggest that the $\sel$-based algorithm becomes much faster than the brute-force algorithm when the word length increases. 

\begin{figure}
\begin{center}
\includegraphics[width=13.0cm]{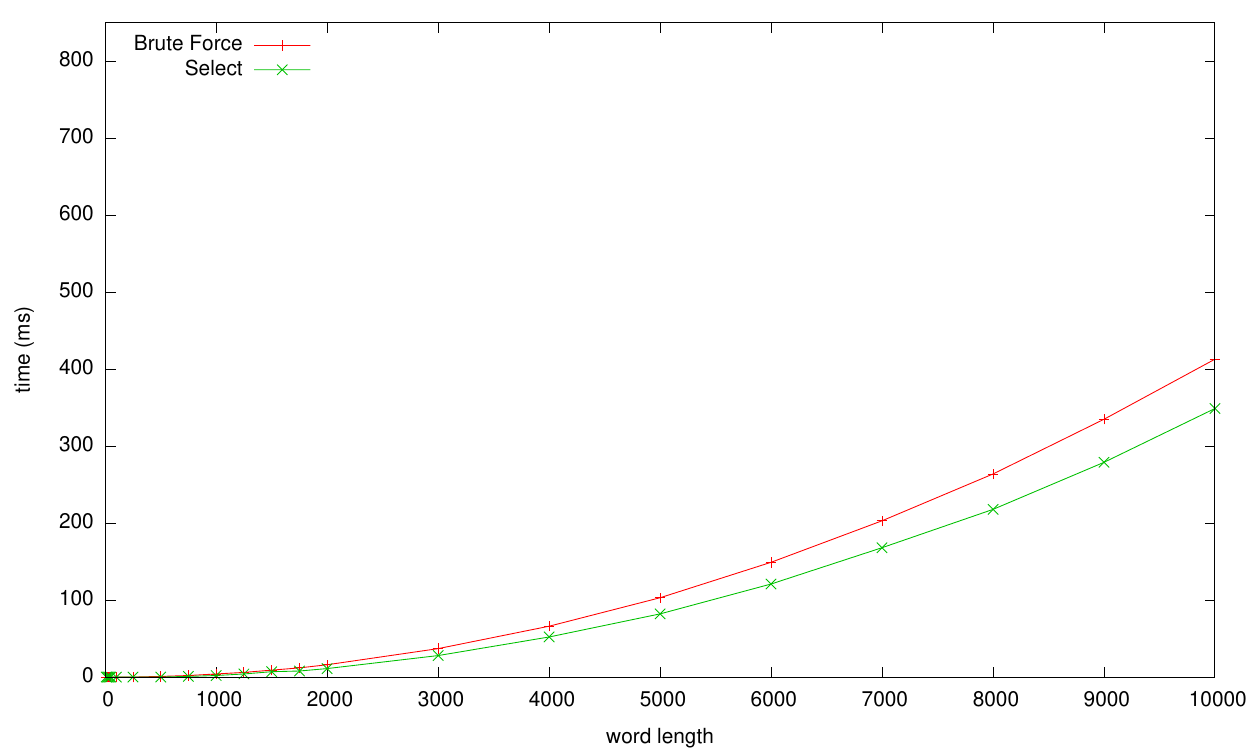}
\includegraphics[width=13.0cm]{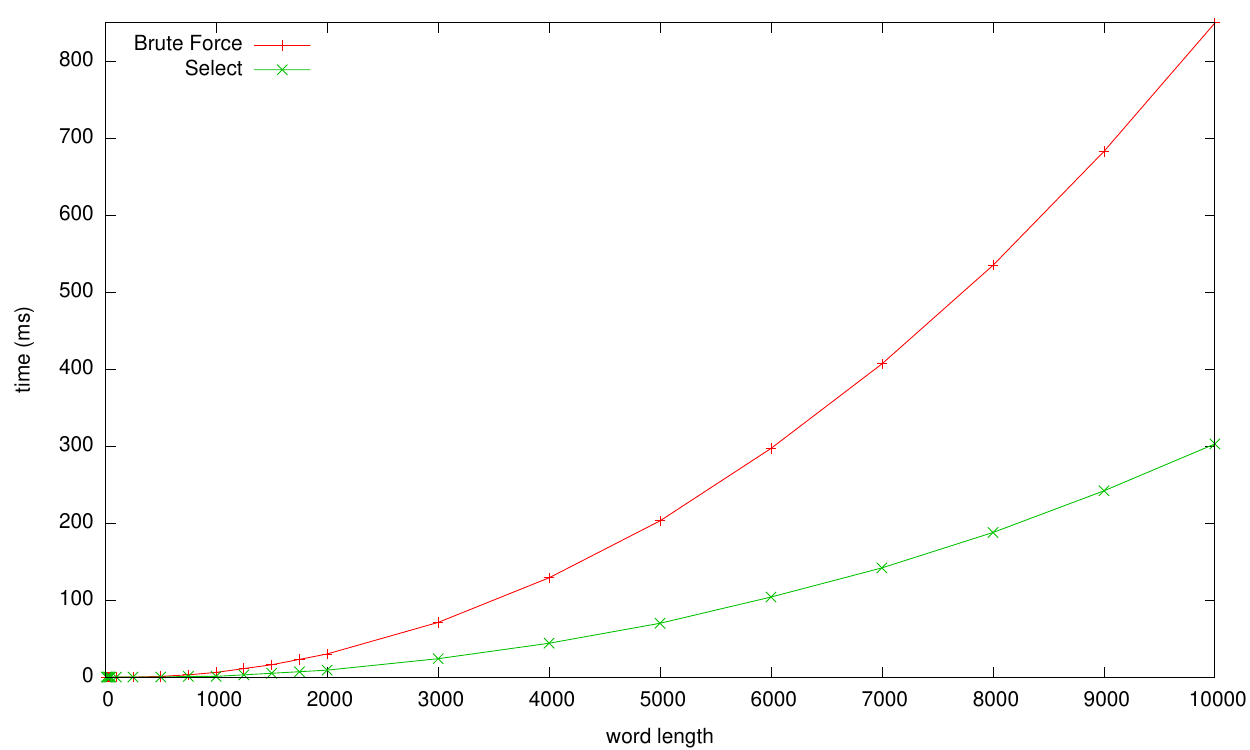}
\end{center}
\caption{\label{fig-time}
Average running times (in ms), over $3\,000$ random words, of the
 Brute-Force and $\sel$-based algorithms
 on alphabet size $2$ (top) and $16$ (bottom), in the case where $h+2p\leqslant |w|$, i.e., for at least two repetitions of the Abelian period.
}
\end{figure}


\section{Conclusion and perspectives}\label{sec-conc}

 This paper is the first attempt to give algorithms for computing all the Abelian
 periods of a word. As shown in Lemma \ref{lemma-max}, the total number of Abelian periods of a word can be quadratic in its length.
 We gave an $O(n^2\times\sigma)$ time off-line algorithm based on the \emph{select} function that in practice appears to be significantly faster than the Brute-Force one, as discussed in the experimental part section. We also presented three on-line algorithms that compute the Abelian periods of all the prefixes of the word. 

However, some Abelian periods exist just as a consequence of the existence of smaller ones. For instance, in the word $w=\sa{abaababa}$ of Example \ref{example1}, the fact that $(1,4),(1,6),(3,4)$ are Abelian periods for $w$ is just a consequence of the fact that $(1,2)$ is. So, let us define the \emph{cutting positions} of an Abelian period $(h,p)$ 
 as follows:
$$\cut(h,p)=\{k=h+jp \mid 1\leqslant k\leqslant |w|\mbox{ and }0\leqslant j\}.$$
We say that an Abelian period $(h,p)$ of $w$ is \emph{non-deducible} if there does not exist
 another Abelian period $(h',p')$
 of $w$ such that $\cut(h,p)\subset\cut(h',p')$.
Anyway, even the number of non-deducible Abelian periods can
 still be quadratic.

It seems quite clear that balanced words (words such that for any letter $a\in \Sigma$ the difference of the number of $a$'s in any two factors of the same length is bounded by $1$) are the words with the maximum number of Abelian periods. In a recent paper, together with Alessio Langiu and Filippo Mignosi \cite{FiLaLeLeMiPG13}, we studied the Abelian repetitions in Sturmian words and gave a formula for computing the smallest Abelian period of the Fibonacci finite words. Preliminary experiments toward this results were done using the algorithms presented in this paper.

On the opposite side, it remains to obtain a bound on the minimal Abelian period given a word
 length and an alphabet size. 
Simple modifications of the presented algorithms would allow one to compute
 the minimal Abelian period of each factor of a word.


\bibliographystyle{plain}

\end{document}